\newtheorem{theorem}{Theorem}
\newtheorem{corollary}{Corollary}
\theoremstyle{definition}
\newtheorem{definition}{Definition}
\theoremstyle{remark}
\newtheorem{remark}{Remark}
\newtheorem{assumption}{Assumption}
\newtheorem{example}{Example}
\algrenewcommand\algorithmicrequire{\textbf{Input:}}
\newcommand{\mpcommentout}[1]{}
\newcommand{\ra}{\rightarrow}
\newcommand{\Z}{\mathbb{Z}}
\newcommand{\N}{\mathbb{N}}
\newcommand{\E}[1]{\mathbb{E}\left[ #1 \right]}
\newcommand{\takeofftau}{\tau}
\newcommand{\cruisetau}{\tau_c}
\newcommand{\vertset}{V}
\newcommand{\vertcard}[1]{|V_{#1}|}
\newcommand{\routeset}{P}
\newcommand{\routecard}{|P|}
\newcommand{\aircraftset}{A}
\newcommand{\aircraftcard}{|A|}
\newcommand{\slotset}[1]{S_{#1}}
\newcommand{\routingvec}[2]{r_{#2}^{#1}}
\newcommand{\numsafesch}{|R|}
\newcommand{\safeschset}{R}
\newcommand{\Cyclelength}{T_{\text{cyc}}}
\newcommand{\Arrivalrate}[1]{\lambda_{#1}}
\newcommand{\Queuelength}[1]{Q_{#1}}
\newcommand{\Lyap}[1]{f(#1)}
\begin{document}
\title{\LARGE \bf
Throughput Maximizing Takeoff Scheduling for eVTOL Vehicles in On-Demand Urban Air Mobility Systems}
\author{Milad~Pooladsanj$^{1}$,
        Ketan~Savla$^{2}$,
        and Petros A. Ioannou$^{1}$
\thanks{$^{1}$M. Pooladsanj and P. Ioannou are with the Department of Electrical Engineering, University of Southern California, Los Angeles, CA 90007 USA {\tt\small (email: pooladsa@usc.edu; ioannou@usc.edu)}.}
\thanks{$^{2}$K. Savla is with the Departments of Civil and Environmental Engineering, Electrical and Computer Engineering, and the Industrial and Systems Engineering, University of Southern California, Los Angeles CA 90089 USA {\tt\small (email: ksavla@usc.edu)}. K. Savla has financial interest in Xtelligent, Inc.}
\thanks{
This research was supported in part by the PWICE institute at USC.
}
}\maketitle

\begin{abstract}
Urban Air Mobility (UAM) offers a solution to current traffic congestion by using electric Vertical Takeoff and Landing (eVTOL) vehicles to provide on-demand air mobility in urban areas. Effective traffic management is crucial for efficient operation of UAM systems, especially for high-demand scenarios. In this paper, we present a centralized framework for conflict-free takeoff scheduling of eVTOLs in on-demand UAM systems. Specifically, we provide a scheduling policy, called VertiSync, which jointly schedules UAM vehicles for servicing trip requests and rebalancing, subject to safety margins and energy requirements. We characterize the system-level throughput of VertiSync, which determines the demand threshold at which the average waiting time transitions from being stable to being increasing over time. We show that the proposed policy maximizes throughput for sufficiently large fleet size and if the UAM network has a certain symmetry property. We demonstrate the performance of VertiSync through a case study for the city of Los Angeles, and show that it significantly reduces average passenger waiting time compared to a first-come first-serve scheduling policy. 
\end{abstract}

\section{Introduction}
Traffic congestion is a significant problem in urban areas, leading to increased travel times, reduced productivity, and environmental concerns. A potential solution to this problem is to add another mode of transportation, such as Urban Air Mobility (UAM), which aims to use urban airspace for on-demand mobility \cite{holden2016uber}. Although the idea of using flying vehicles in urban areas for transportation purposes is not new, e.g., the use of helicopters from 1940s to 1970s~\cite{safadi2023macroscopic}, it has re-gained attention both in academia and industry due to large investments in electric Vertical Takeoff and Landing (eVTOL) vehicles~\cite{karp2022gambling}. While recent research efforts has focused mostly on the vehicles themselves, there has been limited attention paid to the question of how a potentially large fleet of UAM vehicles should operate collectively, and how the traffic should be managed \cite{mueller2017enabling, chin2023traffic}. The objective of traffic management is to efficiently use the limited UAM resources, such as the airspace, takeoff and landing areas, and the UAM vehicles, to meet the demand. The purpose of this paper is to systematically design and analyze an air traffic management protocol for on-demand UAM systems.
\par
The starting point of this paper is the well-known Air Traffic Flow Management (ATFM) problem for commercial airplanes, which was formalized in~\cite{richetta1993solving}. The key idea behind ATFM is to proactively manage congestion by anticipating traffic demand and manage the usage of various airspace and airport resources. To this end, an integer program formulation, called the Traffic Flow Management Problem (TFMP), is solved to assign desired flight trajectories to each airplane subject to operational constraints \cite{bertsimas1998air, bertsimas2011integer}. The UAM traffic management problem can be considered a natural extension of ATFM and its integer program formulation. However, unlike commercial air traffic, where the demand is predictable even weeks in advance, the UAM systems will be designed to provide on-demand services. This poses a significant tactical challenge. 
\par
\begin{figure}[t]
    \centering
    \includegraphics[width=0.4\textwidth]{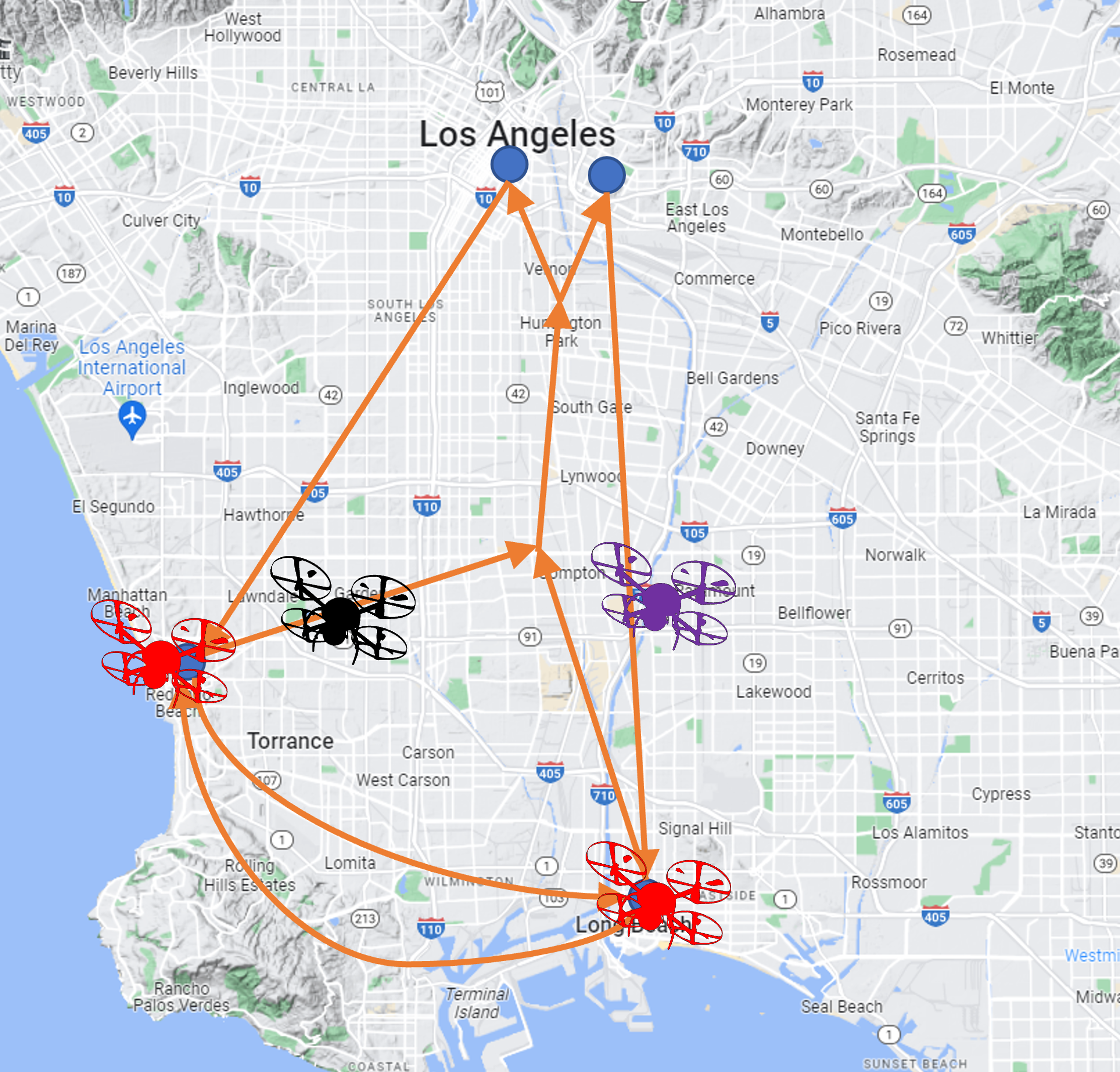}
    
    \vspace{0.1 cm}
    
    \caption{\sf A top-view sketch of a UAM network with three modes of UAM vehicle operation: idle vehicle (red), in-service vehicle that transports passengers (black), and rebalancing vehicle that flies without passengers to high-demand areas (purple).}
    \label{fig:rebalancing}
\end{figure}
The UAM traffic management methods can be generally classified as either centralized or decentralized \cite{bauranov2021designing}. In decentralized methods, each UAM vehicle can select its preferred route while being responsible for its separation margins with other vehicles using onboard technology. These methods can be based on cooperative multi-agent negotiation \cite{wollkind2004automated}, mixed-integer linear
programming \cite{schouwenaars2004decentralized}, decentralized model predictive control \cite{richards2004decentralized}, or Markov decision processes \cite{yang2021autonomous}.
\par
While it is reasonable to expect that the decentralized approach is feasible for low-volume UAM operation, it can lead to a significant loss in efficiency, and even gridlock \cite{FAAConOps}. For this and safety reasons, and for the likely scenario in the near future, it is natural to focus on centralized UAM traffic management, where a central authority assigns conflict-free flight trajectories to each UAM vehicle. Another reason for using centralized traffic management is the limited battery capacity of UAM vehicles \cite{thipphavong2018urban}. Since UAM vehicles operate on electric power, their flight range and operational flexibility are constrained by battery limitations. Efficient scheduling and routing minimizes unnecessary energy consumption, ensures timely recharging, and prevents mid-air energy depletion. Centralized methods are typically modeled as an optimization problem. In \cite{zhu2019pre}, the authors consider a two-phase approach, where in the first phase, an integer program is solved to determine a conflict-free trajectory for a given flight. The solution of the optimization problem is then passed to a velocity profile smoothing model in the second phase. The work in \cite{tang2021automated} considers a combination of pre-computed optimal paths and integer program to determine conflict-free trajectories for all flight requests. Recent works such as \cite{chin2023protocol} extend the existing TFMP formulation to accommodate the on-demand nature of UAM and incorporate fairness. Solution methods other than optimization formulations include heuristic methods such as first-come first-served scheduling \cite{pradeep2018heuristic} and iterative conflict detection and resolution \cite{bosson2018simulation}.
\par
The previous works provide valuable insights into the operation of UAM systems. However, most, if not all, of them do not explicitly address two critical aspects. First is the concept of \emph{rebalancing}: the UAM vehicles will need to be constantly redistributed in the network when the demand for some destinations is higher than others; see Figure~\ref{fig:rebalancing}. Efficient rebalancing ensures the effectiveness and sustainability of on-demand UAM systems. The concept of rebalancing has been explored extensively in the context of on-demand ground transportation \cite{pavone2012robotic}. However, these studies predominantly use flow-level formulations which do not capture the safety and separation considerations associated with UAM vehicle operations. The second aspect which has not been addressed in the UAM literature is a thorough characterization of the system-level throughput. Roughly, the throughput of a given traffic management policy determines the highest demand that the policy can handle \cite{POOLADSANJ2023TRC}. In the context of UAM, the throughput has strong implications on average passenger waiting time. In particular, throughput determines the demand threshold at which the average passenger waiting time transitions from being stable to being increasing over time. Therefore, it is desirable to design a policy that achieves the maximum possible throughput.
\par
In response to the aforementioned gaps in the literature, we propose a centralized policy, called VertiSync, for conflict-free takeoff scheduling in on-demand UAM networks. VertiSync jointly schedules the UAM vehicles for servicing trip requests and rebalancing within the network, subject to safety margins and energy requirements. Since our primary focus is on scheduling, the energy requirements are treated as sufficient conditions that do not constrain the scheduling process.
\par
The proposed policy modifies and extends the TFMP by incorporating elements of our recent work on ramp metering in ground transportation \cite{POOLADSANJ2023TRC}. In \cite{POOLADSANJ2023TRC}, we used queuing theory to design algorithms that maximize freeway throughput without knowledge of demand. Although the methods for characterizing throughput in this paper are conceptually similar to those in \cite{POOLADSANJ2023TRC}, the underlying systems are significantly different. Specifically, in \cite{POOLADSANJ2023TRC}, we model an entire freeway as a single static ``server". In contrast, in this paper, we deal with multiple dynamic servers—UAM vehicles moving around in the system—adding complexity to the problem. 
\par
The primary contributions of this paper are as follows:
\begin{enumerate}
    \item Developing a conflict-free takeoff scheduling policy, called VertiSync, for on-demand UAM networks, subject to safety margins and energy requirements.
    \item Explicitly incorporating rebalancing into the UAM scheduling framework.
    \item Characterizing the system-level throughput of VertiSync, and demonstrating its effectiveness through a case study for the city of Los Angeles.
\end{enumerate}
\par
The rest of the paper is organized as follows: in Section~\ref{sec:problem-formulation}, we describe the problem formulation. We provide our traffic management policy and characterize its throughput in Section~\ref{section:scheduling}. We provide the Los Angeles case study in Section~\ref{section:simulation}, and conclude the paper in Section~\ref{section:conclusion}.

\section{Problem Formulation}\label{sec:problem-formulation}
We use the following standard notations throughout the paper. We let $\N$ be the set of positive integers, and $\N_{0}$ be the set of non-negative integers. For $n \in \N$, we use $[n]$ to denote the set $\{1, 2, \cdots, n\}$. For a set $A$, we let $|A|$ denote its cardinality. 
\subsection{UAM Network Structure}
\mpcommentout{
A UAM network will consist of a number of take-off/landing areas, called \emph{vertiports}, connected by a set of routes. Each vertiport may have multiple takeoff/landing pads, called \emph{vertipads}.
}
We describe the UAM network by a directed graph $\mathcal{G}$. A node in the graph $\mathcal{G}$ represents either a \emph{vertiport}, that is, the take-off/landing area, or an intermediate point where two or more routes cross paths. A link in the graph $\mathcal{G}$ represents a section of a route (or multiple routes). We let $\vertset$ be the set of vertiports. We let $N_v$ be the total number of \emph{vertipads}, i.e., takeoff/landing pads, at vertiport $v \in \vertset$. An Origin-Destination (O-D) pair $p$ is an ordered pair $p=(o_p,d_p)$ where $o_p, d_p \in \vertset$ and there is at least one route from $o_p$ to $d_p$. We let $\routeset$ be the set of O-D pairs; see Figure~\ref{fig:graph-example}. To simplify the network representation, we assume that each vertiport has exactly one outgoing link exclusively used for takeoffs from that vertiport and a separate incoming link exclusively used for landings. To simplify the analysis, we assume that there is at most one route between any two vertiports. Although this assumption does not change our takeoff scheduling policy, future work can extend our analysis when this assumption does not hold. We also assume that the UAM routes do not conflict with any existing airspace.
\begin{figure}[t]
    \centering
    \includegraphics[width=0.4\textwidth]{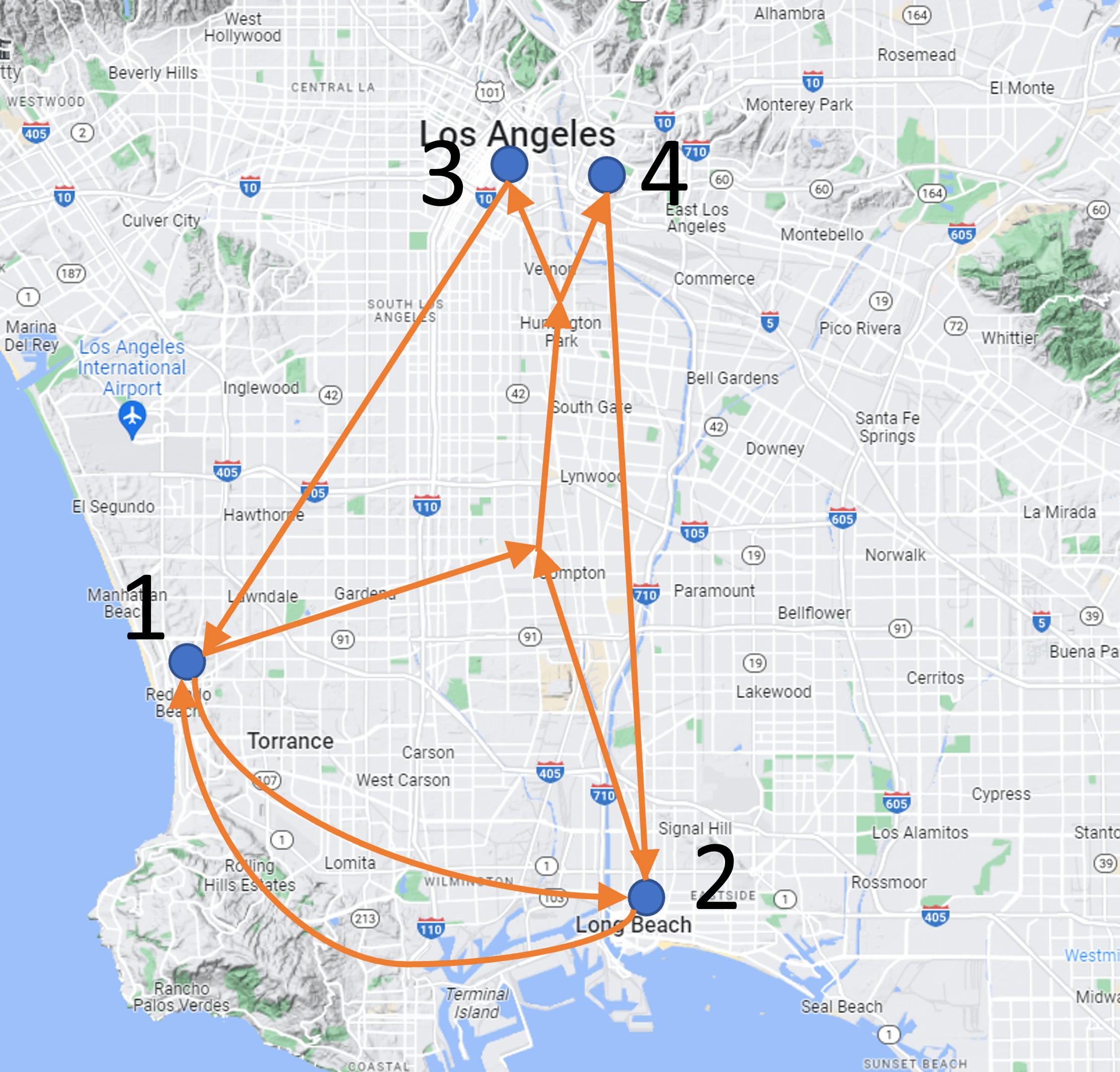}
    
    \vspace{0.1 cm}
    
    \caption{\sf A top-view sketch of a UAM network with $\vertcard{}=4$ vertiports (blue circles) and $\routecard=8$ O-D pairs $\routeset = \{(1,3), (1,4), (2,3), (2,4), (3,1), (4,2), (1,2), (2,1)\}$.}
    \label{fig:graph-example}
\end{figure}
\begin{assumption}
Given an O-D pair $p = (o_p, d_p)$, the opposite pair $q = (d_p, o_p)$ may not necessarily be an O-D pair. However, to enable rebalancing, we assume that there exists a sequence of routes connecting $d_p$ to $o_p$, where the first route originates from $d_p$ and the last route ends at $o_p$. This assumption holds for any two vertiports. In graph theory terms, this implies that the graph $\mathcal{G}$ is strongly connected. Additionally, if $q$ is not an O-D pair, we assume that UAM vehicles may land and takeoff at any intermediate vertiports along the sequence of routes. Note that this assumption only applies if $q$ is not an O-D pair and the UAM vehicle needs to be rebalanced to $o_p$. Therefore, the temporary landing and takeoff at intermediate vertiports, which might be necessitated by the scheduling algorithm for the sake of efficiency,  will be without any passengers onboard. 
\end{assumption}
\par
In the next section, we will discuss the operational constraints of UAM vehicles.

\subsection{Operational Constraints}\label{section:op-constraints}
In this section, we describe the constraints and assumptions related to the flight operations of the UAM vehicle. We assume that all UAM vehicles have the same characteristics so that these constraints are the same for all of them. Let $\aircraftset$ be the set of UAM vehicles in the system. The flight operation of each vehicle consists of the following three phases: 
\begin{itemize}
    \item \textbf{takeoff:} During this phase, the UAM vehicle is positioned on a departure vertipad and passengers (if any) are boarded before the vehicle is ready for takeoff. To position the vehicle on the departure vertipad, it is either transferred from a parking space or directly lands from a different vertiport. Let $\takeofftau$ denote the \emph{takeoff separation}, which is the minimum time required between successive takeoffs from the same vertipad. In other words, the takeoff operations are completed in a $\takeofftau$-minute time window for every flight, which implies that the takeoff rate from each vertipad is at most one vehicle per $\takeofftau$ minutes. 
    \item \textbf{airborne:} To ensure safe operation, all UAM vehicles must maintain minimum horizontal and vertical safety margins from each other while airborne. According to the FAA standards \cite{FAAConOps, bauranov2021designing}, UAM corridors are divided into three-dimensional sectors, with each sector having a certain capacity. These sectors account for the time it takes for a UAM vehicle to travel from the vertiport to the cruising altitude and back from the cruising altitude to the vertiport. Therefore, they are not necessarily equidistant. Without loss of generality, we set the capacity of each sector to $1$ vehicle to avoid the need for tactical deconfliction within a sector. We discretize time into time steps of length $\cruisetau$, assuming that $\cruisetau$ is uniform across all sectors, such that in each time-step, a vehicle moves to the next sector; see Figure~\ref{fig:space-time-discretization}. This means that $\cruisetau$ also captures the time required for a vehicle to travel between the vertiport and the cruising altitude. 
    We further assume that $\takeofftau \geq \cruisetau$, i.e., the takeoff separation is more restrictive than the separation imposed by airborne safety margins, and $k_{\tau}:= \takeofftau/\cruisetau$ is integer-valued. These assumptions are based on the current technological limitations \cite{bosson2018simulation, vascik2017constraint}. Future work may extend our results as technology or regulations evolve.
    \item \textbf{landing:} Once a UAM vehicle lands, passengers (if any) are disembarked, new passengers (if any) are embarked, and the vehicle undergoes servicing if needed. Thereafter, the vehicle is either transferred to a parking space, e.g., for re-charging, or, if it has boarded new passengers or needs to be rebalanced, takes off to another vertiport. Similar to takeoff operations, we assume that the landing operations are completed within a $\takeofftau$-minute time window for every flight. That is, once a vehicle lands, the next takeoff or landing can occur after $\takeofftau$ minutes. 
    We assume that each vertiport has enough parking capacity to clear an arriving UAM from the vertipad after landing, and that these parking spots are equipped with charging stations.
%
    \end{itemize}
\mpcommentout{
\begin{remark}
The above assumptions regarding the takeoff, landing, and airborne separations are practical given the current technological limitations \cite{bosson2018simulation, vascik2017constraint}. However, our results can be generalized and are not limited to these specific assumptions.
\end{remark}
}
\par
\begin{figure}[t]
    \centering
    \includegraphics[width=0.4\textwidth]{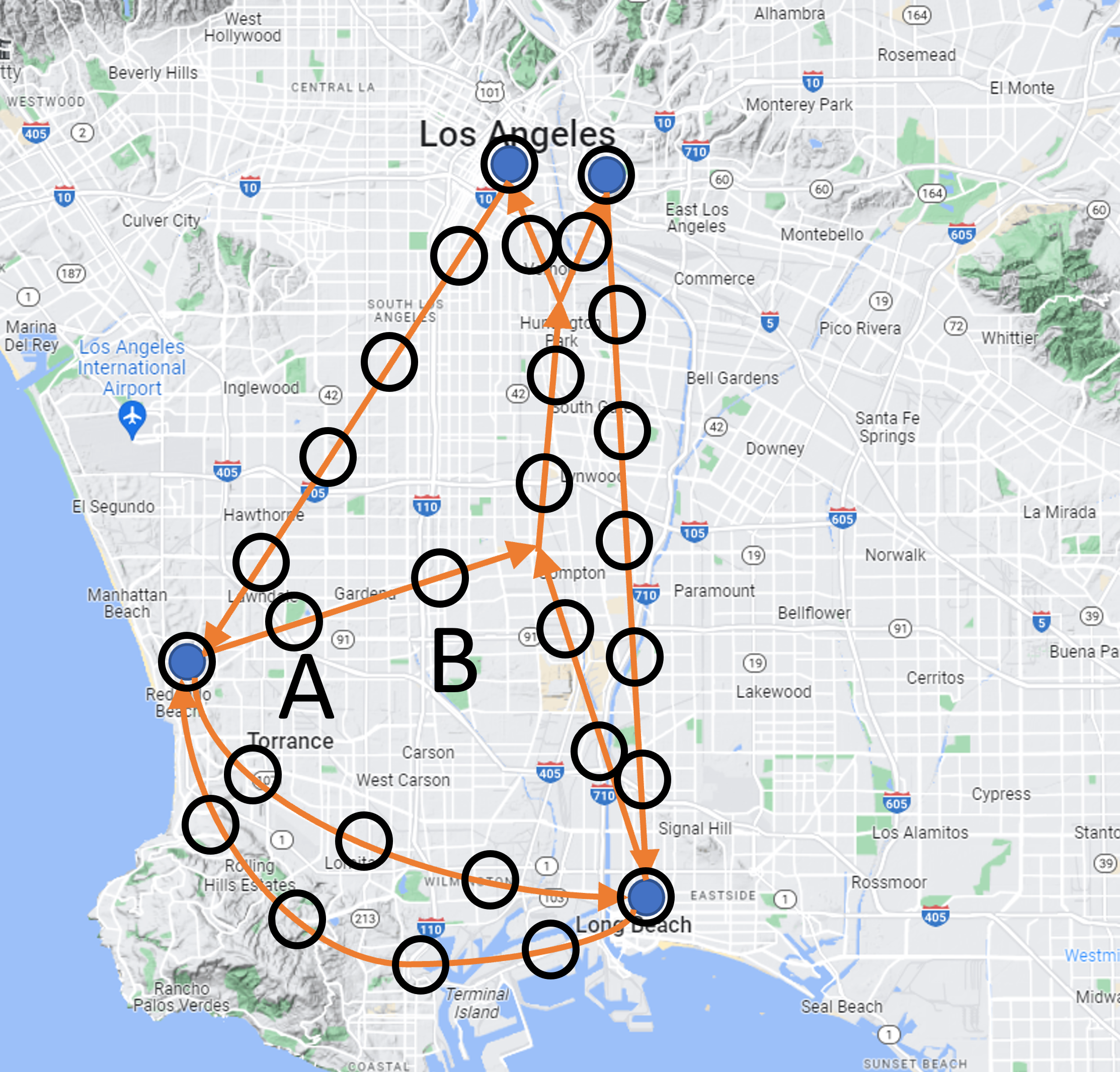}
    
    \vspace{0.1 cm}
    
    \caption{\sf Sector configuration for a UAM network, with sector capacity of $1$ vehicle, i.e., at most $1$ UAM vehicle can occupy any sector at any time. Moreover, if a UAM vehicle occupies sector A, then it moves to sector B after one time step.}
    \label{fig:space-time-discretization}
\end{figure}
Without loss of generality, we assume that different links of the graph $\mathcal{G}$ are at a safe horizontal and vertical distance from each other, except in the vicinity of the nodes where they intersect.
We consider the ideal case of no external disturbance such as adverse weather conditions. As a result, if a UAM vehicle's flight trajectory satisfies the safety margins and the separation requirements, then the vehicle follows it without deviating from the trajectory. 
Since we only focus on conflict-free takeoff scheduling policies in this paper, we do not specify the low-level vehicle controller that follows a given trajectory.

\subsection{Demand and Performance Metric}\label{section:demand}
\begin{figure}[t]
    \centering
    \includegraphics[width=0.4\textwidth]{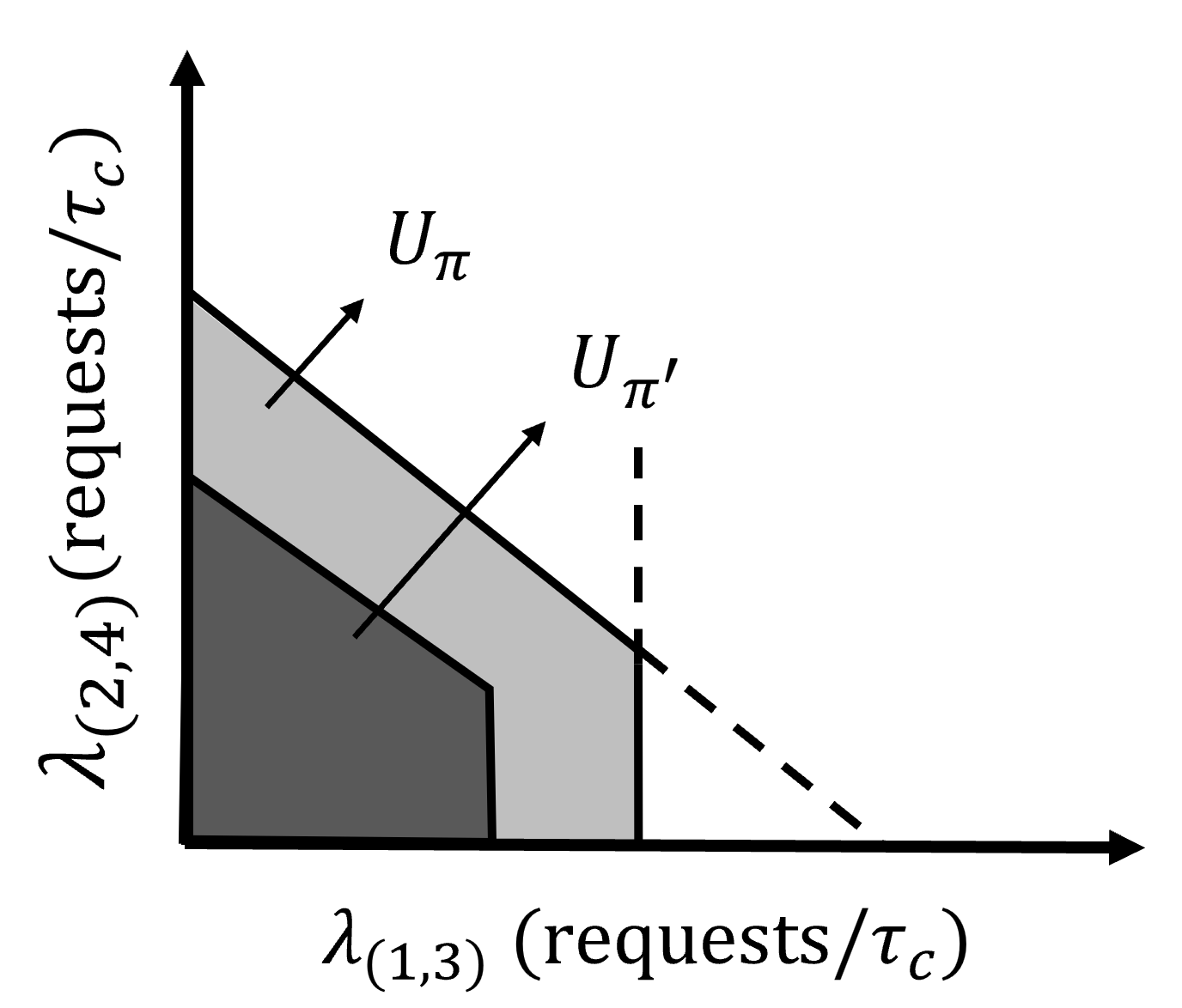}
    
    \vspace{0.1 cm}
    
    \caption{\sf An illustration of the under-saturation region of some policy $\pi'$ (dark grey area) and a throughput maximizing policy $\pi$ (dark + light grey areas).}
    \label{fig:throughput-illustration}
\end{figure}
In an on-demand UAM network, the demand is likely not known in advance. To model this unpredictability, we use exogenous stochastic processes. For ease of performance analysis, we adopt a discrete time setting. Let each time step have a duration of $\cruisetau$, which corresponds to the time needed for a UAM vehicle to move to an adjacent sector, as described in Section~\ref{section:op-constraints}. The number of trip requests for an O-D pair $p \in \routeset$ is assumed to follow an i.i.d Bernoulli process with parameter $\Arrivalrate{p}$, independent of other O-D pairs. That is, at any given time step, the probability that a new trip is requested for the O-D pair $p$ is $\Arrivalrate{p}$ independent of everything else. Note that $\Arrivalrate{p}$ specifies the rate of new requests for the O-D pair $p$ in terms of the number of requests per $\cruisetau$ minutes. Let $\Arrivalrate{} := (\Arrivalrate{p})$ denote the vector of arrival rates. 
\par
A practical scenario is one where the trip requests are made in advance by passengers, e.g., via a mobile app, with passengers arriving at the vertiport at their scheduled departure times as determined by the scheduling policy. For each O-D pair, trip requests are placed in an unlimited capacity queue until they are \emph{serviced}, at which point they exit the queue. These queues do not represent ``physical" waiting lines; rather, they function as ordered lists of trip requests based on their arrival times. 
In order for a request to be serviced, a UAM vehicle serving that request must take off from the vertiport \footnote{By using this notion of service, queue length at time $t$ is a function of takeoff at time $t$. Alternatively, one could use UAM vehicles landing as the notion of service by appropriately shifting this function in time.}. A \emph{scheduling policy} is a rule that schedules the UAM vehicles in the system for either servicing trip requests or rebalancing, i.e., taking off without passengers to service trip requests at other vertiports.
\par
The objective of the paper is to design a policy that can handle the maximum possible demand under the operational constraints discussed in Section~\ref{section:op-constraints}. The key performance metric to evaluate a policy is the notion of \emph{throughput} which we will now formalize. For $p \in \routeset$, let $\Queuelength{p}(t)$ be the number of outstanding trip requests for O-D pair $p$ at time $t$. Let $\Queuelength{}(t) = (\Queuelength{p}(t))$ be the vector of outstanding trip requests for all the O-D pairs at time $t$. We define the \emph{under-saturation} region of a policy $\pi$ as 
\begin{equation*}
    U_{\pi} = \{\Arrivalrate{}: \limsup_{t \to \infty} \E{\Queuelength{p}(t)} < \infty~~\forall p \in \routeset~ \text{under policy}~\pi\}.
\end{equation*}
\par
This is the set of $\Arrivalrate{}$'s for which the network remains \emph{under-saturated}, meaning that the expected number of outstanding trip requests remain bounded for all the O-D pairs. The boundary of this set is called the throughput of policy $\pi$. We are interested in finding a policy $\pi$ such that $U_{\pi'} \subseteq U_{\pi}$ for all policies $\pi'$, including those that have information about the demand $\Arrivalrate{}$ in advance. In other words, if the network remains under-saturated using some policy $\pi'$, then it also remains under-saturated using policy $\pi$. In that case, we say that policy $\pi$ maximizes the throughput for the UAM network. In the next section, we introduce one such policy. 
\begin{remark}
A rigorous definition of throughput should also include its dependence on the initial condition of the UAM vehicles and the initial queue sizes. We have removed this dependence for simplicity since the throughput of our policy
does not depend on the initial condition.
\end{remark}

\begin{example}\label{example:throughput-illustration}
Consider the UAM network in Figure~\ref{fig:graph-example}, and suppose that a policy $\pi$ is able to maximize the throughput; that is, for any other policy $\pi'$, we have $U_{\pi'} \subseteq U_{\pi}$. Suppose further that $\Arrivalrate{p}$ is fixed for every O-D pair $p$ except $(1,3)$ and $(2, 4)$. An illustration of $U_{\pi}$ and $U_{\pi'}$ is shown in Figure~\ref{fig:throughput-illustration}. From the figure, one can see that if $(\Arrivalrate{(1, 3)},\Arrivalrate{(2, 4)}) \in U_{\pi'}$, then $(\Arrivalrate{(1, 3)},\Arrivalrate{(2, 4)}) \in U_{\pi}$, i.e., if the expected number of outstanding trip requests remain bounded under the policy $\pi'$, then it also remains bounded under policy $\pi$. 
\end{example}

\section{Network-Wide Scheduling}\label{section:scheduling}

\mpcommentout{
\begin{figure}[t]
    \centering
    \includegraphics[width=0.3\textwidth]{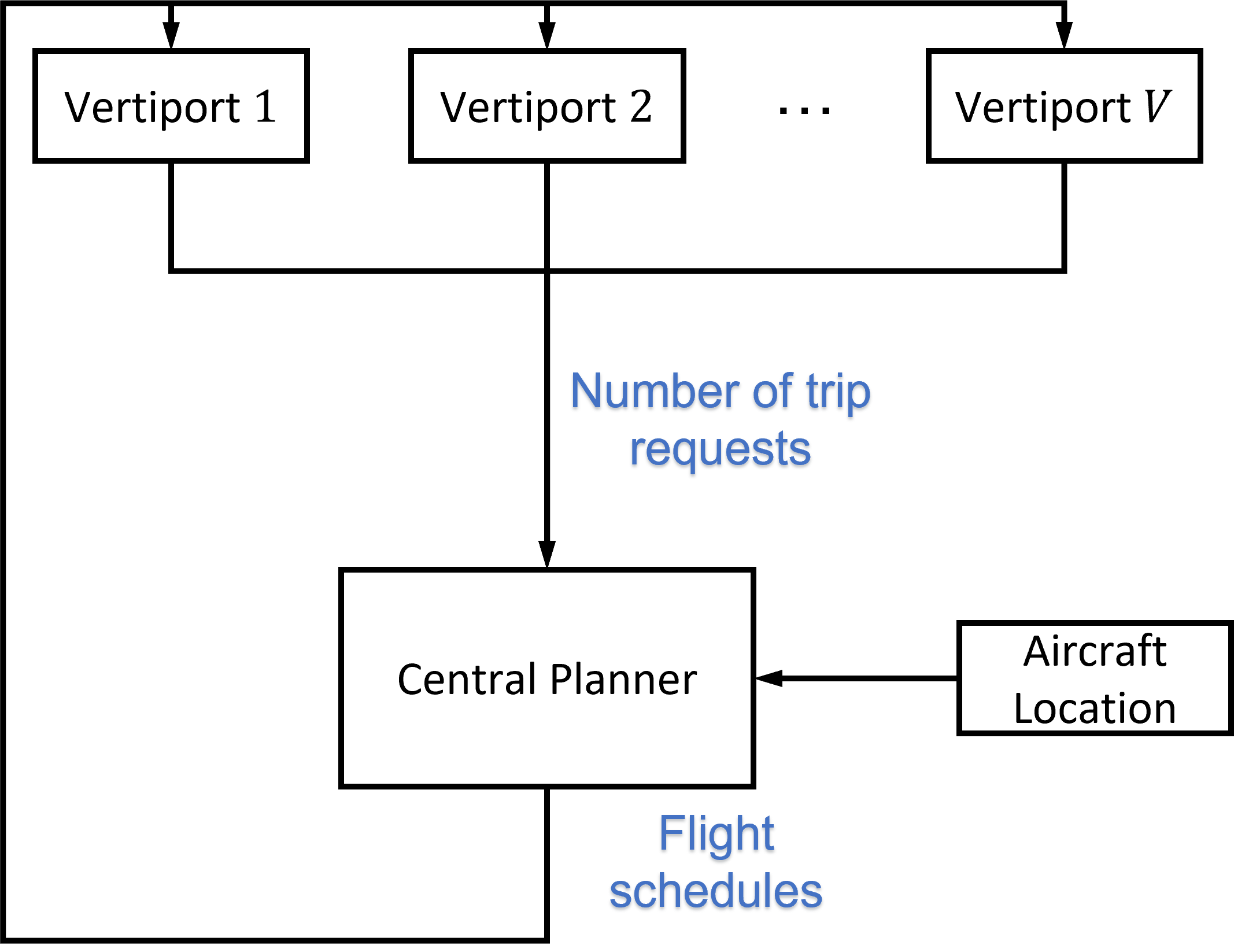}
    
    \vspace{0.1 cm}
    
    \caption{\sf The information flow in the VertiSync policy at the beginning of a cycle.}
    \label{fig:renewal-schematic}
\end{figure}
}
\subsection{VertiSync Policy}
We now introduce our scheduling policy, called VertiSync, which is inspired by the literature on queuing theory \cite{armony2003queueing} and the classical TFMP \cite{bertsimas1998air}. Informally, VertiSync works in cycles during which only the trips that were requested before the start of a cycle are serviced during that cycle. To this end, at the start of a cycle, the central planner determines conflict-free takeoff schedules to service all outstanding trip requests. Within each O-D pair, trip requests are serviced in a First Come First Serve (FCFS) manner. However, this ordering is not necessarily maintained across different O-D pairs, as the scheduling policy prioritizes conflict-free operations across the network. Once takeoff schedules are determined, they are communicated to the UAM vehicles and vertiport operators responsible for handling takeoff and landing operations at each vertiport. When all outstanding trip requests are serviced, i.e., some UAM vehicle serving those requests has taken off from the vertiport, the cycle ends, and the next cycle starts. It is assumed that the central planner knows the state of each UAM vehicle as well as the number of outstanding trip requests for each O-D pair.   
\par 
As discussed in Section~\ref{section:op-constraints}, we divide the UAM \emph{corridors} into sectors of capacity $1$, as shown in Figure~\ref{fig:slots}. Time is discretized into time steps of length $\cruisetau$, such that in each time step, a UAM vehicle moves to an adjacent sector. Recall that these sectors are constructed to ensure that adjacent sectors satisfy airborne safety margins. We extend sectors to include the origin and destination vertiports, with capacity of each \emph{vertiport} sector equal to the number of vertipads at that vertiport. 
In particular, the first sector for O-D pair $p \in \routeset$ is located at the origin vertiport $o_p$, meaning that the UAM vehicle is located on a vertipad at $o_p$, and the last sector is located at the destination vertiport $d_p$, meaning that the UAM vehicle has landed on a vertipad at $d_p$. 
Without loss of generality, we assume that if a link is common to two or more routes, then the sectors associated with those routes coincide with each other on that link, e.g., the green sector in Figure~\ref{fig:slots}. 
We assign a unique identifier to each sector, with overlapping sectors belonging to different routes having \textit{different} identifiers. For example, the green sector in Figure~\ref{fig:slots} has four different identifiers each belonging to routes $(1,3), (1,4), (2,3),$ and $(2,4)$. We do this intentionally for more clarity in the mathematical formulation of the problem later on. Let $\slotset{p}$ be the set of sectors associated with O-D pair $p$. 
\par
\begin{figure}[t]
    \centering
    \includegraphics[width=0.4\textwidth]{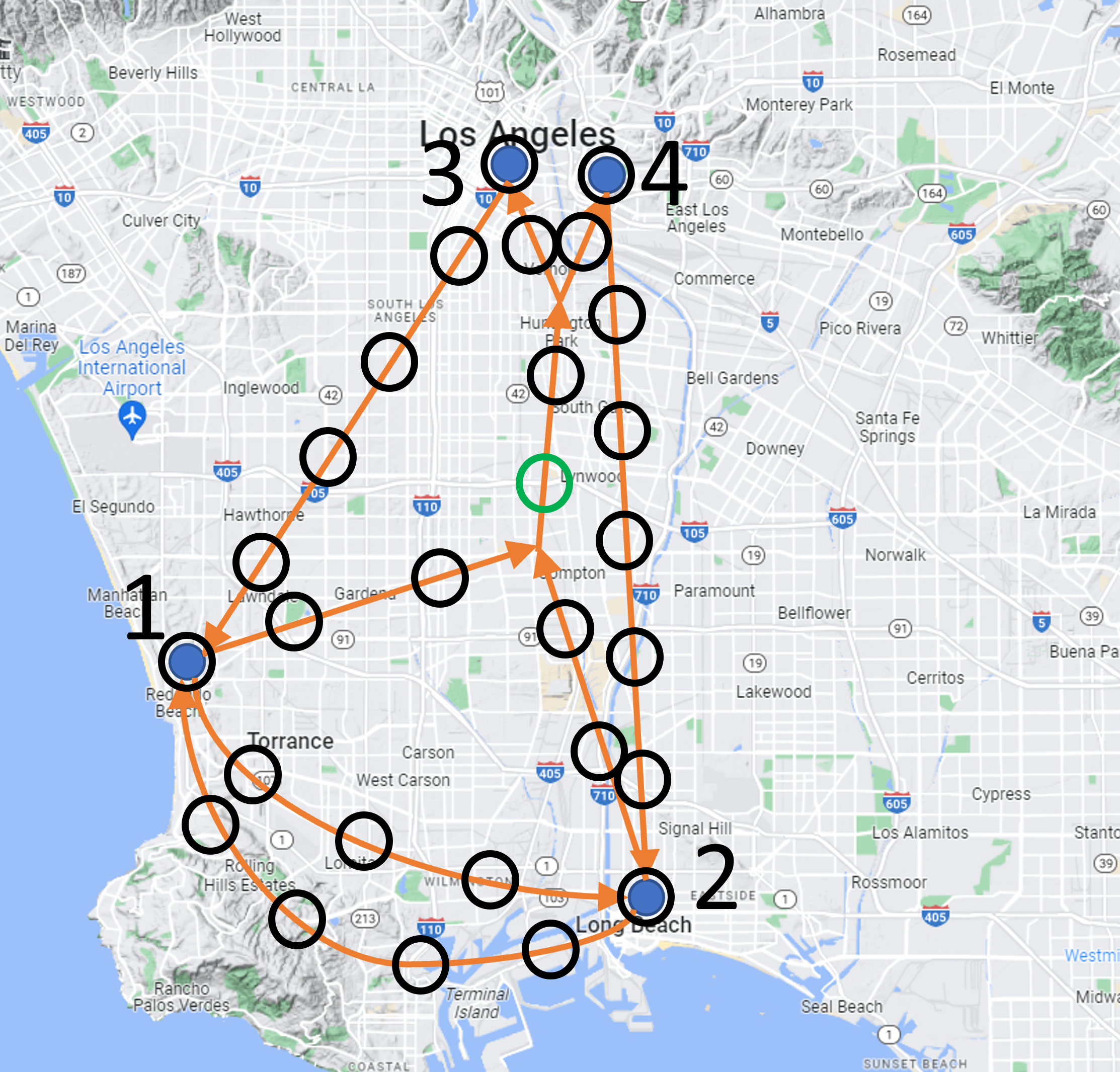}
    
    \vspace{0.1 cm}
    
    \caption{\sf Sector configuration for a UAM network. The green sector belongs to routes $(1,3), (1,4), (2,3), (2,4)$.}
    \label{fig:slots}
\end{figure} 
Let $t_k$ be the start time of the $k$-th cycle, $n \in \N_{0}$, $p \in \routeset$, and $i \in \slotset{p}$. A key decision variable in the VertiSync formulation is $w_{i,n}^{a,p}$, which represents the number of times that vehicle $a \in \aircraftset$ has visited sector $i$ of route $p$ in the time interval $(t_k,t_k+n\cruisetau]$. For brevity, $t_k$ is dropped from $w_{i,n}^{a,p}$, as we conduct scheduling one cycle at a time. By definition, $w_{i,n}^{a,p}$ is non-decreasing with respect to $n$. Moreover, if $w_{i,n}^{a,p}-w_{i,n-1}^{a,p} = 1$ for some $n \geq 1$, then it means that vehicle $a$ must have occupied sector $i$ at some time in the interval $(t_k+(n-1)\cruisetau, t_k+n\cruisetau]$. Note that this occupation time is not necessarily a multiple of $\cruisetau$. We also use the variable $w_{o,n}^{a,p}$ to represent the number of times that vehicle $a$ with route $p$ has taken off from vertiport $o_p$ in the interval $(t_k,t_k+n\cruisetau]$. Similarly, $w_{d,n}^{a,p}$ denotes the number of times that vehicle $a$ with route $p$ has landed on vertiport $d_p$ in the interval $(t_k,t_k+n\cruisetau]$. Finally, we use the variable $w_{v, n}^{a}$ to denote the number of times that vehicle $a$ has visited vertiport $v$ in the interval $(t_k,t_k+n\cruisetau]$. For sector $i$, we use the notation $i+1$ to specify the next sector along a UAM vehicle's route. Moreover, given two O-D pairs $p, q \in \routeset$ and sectors $i \in \slotset{p}$ and $j \in \slotset{q}$, we use the notation $i=j$ to specify that sector $i$ coincides with sector $j$. 
We are now in a position to formally introduce VertiSync.
\begin{definition}\label{def:renewal-policy}
\textbf{(VertiSync Policy)} 
The policy works in cycles of \emph{variable} length, with the first cycle starting at time $t_1 = 0$. At the beginning of the $k$-th cycle at time $t_k$, each vertiport communicates to the central planner the number of trip requests for each O-D pair that originates from that vertiport, i.e., the vector of trip requests $\Queuelength{}(t_k) = (\Queuelength{p}(t_k))$ is communicated to the central planner. During the $k$-th cycle, only these requests will be serviced. The $k$-th cycle ends once all these requests have been serviced, i.e., right after the last takeoff. 
\par
The central planner solves the following optimization problem to determine the takeoff schedules during the $k$-th cycle. The objective of the optimization is to minimize the total flight time of all UAM vehicles \footnote{In fact, we are minimizing the total rebalancing component of this flight time.}. That is, we minimize:
\begin{equation}\label{eq:TFMP-objective}
\sum_{a \in \aircraftset}\sum_{p \in \routeset}(w_{o,M_k}^{a,p}-w_{o,0}^{a,p})T_p,
\end{equation} 
where $T_p$ is the flight time for route $p$, including the time it takes to travel between the vertiport and the cruising altitude, and $M_k \in \N$ is such that $M_k \cruisetau$ is a conservative upper bound on the duration of the $k$-th cycle. For example, $\sum_{p \in \routeset}\Queuelength{p}(T_p + \Tilde{T}_p)$ is one such upper bound, which is calculated under the assuming that all outstanding trip requests are serviced by a single UAM vehicle. Here, $\Tilde{T}_p$ represents the travel time from $d_p$ to $o_p$. The following constraints must be satisfied:
    \begin{subequations}\label{eq:TFMP-constraints}
        \begin{align}
            &\sum_{a \in \aircraftset}(w_{o,M_k}^{a,p}-w_{o,0}^{a,p}) \geq \Queuelength{p}(t_k),\hspace{1cm} \forall p \in \routeset, \label{eq:TFMP-constraint-serviceall} \\
             &w_{i,n-1}^{a,p}-w_{i,n}^{a,p} \leq 0, \notag \\ 
             &\hspace{1.5cm}\forall n \in [M_k],~ p \in \routeset,~i \in \slotset{p},~ a \in \aircraftset, \label{eq:TFMP-constraint-w-increasing}\\
%
%
             &w_{i+1,n}^{a,p} = w_{i,n-1}^{a,p}, \notag \\
             &\hspace{1.4cm}\forall n \in [M_k],~p \in \routeset, ~i \in \slotset{p}:i \neq d,~a \in \aircraftset, \label{eq:TFMP-constraint-aircraft-next-slot}\\
             &w_{v,n}^{a}=w_{v,n-1}^{a} + \sum_{p \in \routeset: d_p = v}(w_{d,n}^{a,p}-w_{d,n-1}^{a,p}), \notag \\
             &\hspace{1.4cm}\forall n \in [M_k],~v \in \vertset,~a \in \aircraftset, \label{eq:TFMP-constraint-aircraft-next-vertiport}\\
             &\sum_{p \in \routeset: o_p = v}w_{o,n}^{a,p} - w_{v,n-k_{\tau}}^{a} \leq 0, \notag \\
             &\hspace{1.4cm}\forall n \in \{k_{\tau}, \cdots, M_k\},~v \in \vertset,~a \in \aircraftset, \label{eq:TFMP-constraint-takeoff-lessthan-vertvisits}\\
             &\sum_{a \in \aircraftset}(w_{i,n}^{a,p}-w_{i,n-1}^{a,p}) + (w_{j,n}^{a,q}-w_{j,n-1}^{a,q}) \leq 1, \notag \\
             &\hspace{1.4cm}~\forall n \in [M_k],~ p,q \in \routeset,~ i \in \slotset{p}, j \in \slotset{q}: i=j, \notag\\
             &\hspace{4.88cm} i,j \neq o,d, \label{eq:TFMP-constraint-air-safety-over}\\
%
%
             &\sum_{a \in \aircraftset}\sum_{p \in \routeset: o_p=v}(w_{o,n}^{a,p}-w_{o,n-k_{\tau}}^{a,p}) \leq N_v, \notag\\
             &\hspace{1.4cm}~\forall n \in \{k_{\tau},\ldots,M_k\},~v \in \vertset, \label{eq:TFMP-constraint-takeoff-separation} \\
%
%
%
             &\sum_{a \in \aircraftset}\left(\sum_{p \in \routeset: o_p=v}(w_{o,n}^{a,p}-w_{o,n-1}^{a,p}) \right. \notag \\
             &\hspace{2cm} \left. +\sum_{q \in \routeset: d_q=v}(w_{d,n-1}^{a,q}-w_{d,n-k_{\tau}}^{a,q})\right) \leq N_v,\notag\\
             &\hspace{1.4cm}~\forall n \in \{k_{\tau},\ldots,M_k\},~v \in \vertset, \label{eq:TFMP-constraint-landing-separation} \\
%
%
             &w_{i,n}^{a,p},~ w_{v,n}^{a} \in \N_{0}, \notag \\
             &\hspace{1.4cm}~\forall n \in [M_k],~v \in \vertset,~p \in \routeset,~i \in \slotset{p},~a \in \aircraftset. \notag
        \end{align}
    \end{subequations}
\par 
Constraint \eqref{eq:TFMP-constraint-serviceall} ensures that all outstanding trip requests are serviced by the end of the cycle. Constraint \eqref{eq:TFMP-constraint-w-increasing} forces the decision variables $w_{i,n}^{a,p}$ to be non-decreasing in time. Constraint \eqref{eq:TFMP-constraint-aircraft-next-slot} guarantees that if vehicle $a$ occupies sector $i$ at some time $t \in (t_k+(n-1)\cruisetau,t_k+n\cruisetau]$, then it will occupy sector $i+1$ at time $t+\cruisetau$. Constraint \eqref{eq:TFMP-constraint-aircraft-next-vertiport} updates the number of visits to vertiport $v$ when vehicle $a$ lands on $v$, and implicitly forces the decision variables $w_{v,n}^{a}$ to be non-decreasing in time. Constraint \eqref{eq:TFMP-constraint-takeoff-lessthan-vertvisits} ensures that the number of takeoffs from vertiport $v$ is no more than the number of visits to vertiport $v$. Constraint \eqref{eq:TFMP-constraint-air-safety-over} ensures the airborne safety margins by allowing at most one UAM vehicle occupying any overlapping sector at any time. Constraint \eqref{eq:TFMP-constraint-takeoff-separation} ensures that the takeoff separation is satisfied at every vertiport by allowing at most $N_v$ takeoffs during any $\takeofftau$ time window. Similarly, \eqref{eq:TFMP-constraint-landing-separation} ensures that the landing separation is satisfied at every vertiport by restricting the number of landings and immediate takeoffs to at most $N_v$ during any $\takeofftau$ time window. 
\par
We also need additional constraints to take into account UAM vehicles' battery limitations. Let $E_p$ be the rate of battery consumption while flying route $p$, which is calculated as the sum of the battery consumption required for takeoff, cruise, and landing. Let $E_{n}^{a}$ be vehicle $a$'s state of charge at time $n$, and let $u_{v, n}^{a}$ denote the number of times vehicle $a$ has undergone re-charging at vertiport $v$ in the time interval $(t_k, t_k+n\cruisetau]$. In addition to the constraints in \eqref{eq:TFMP-constraints}, we require: 
\begin{subequations}\label{eq:TFMP-energy-constraints}
\begin{align}
    &E_{n}^{a} = E_{n-k_c}^{a} - \sum_{p \in \routeset}(w_{o,n}^{a,p}-w_{o,n-1}^{a,p})E_p \notag \\ 
    &\hspace{1.4cm}+ \sum_{v \in \vertset} (u_{v, n - k_c}^{a} - u_{v, n - k_c - 1}^{a}) (E_{\text{max}} - E_{n-k_c}^{a}), \notag \\
    &\hspace{1.4cm}~\forall n \in \{k_c + 1, \ldots, M_k\},~a \in \aircraftset, \label{eq:TFMP-constraint-energy-balance} \\
    &u_{v,n-1}^{a}-u_{v,n}^{a} \leq 0, \notag \\ 
    &\hspace{1.5cm}\forall n \in [M_k],~ v \in \vertset,~ a \in \aircraftset, \label{eq:TFMP-constraint-u-increasing}\\
    &E_{\text{min}} \leq E_{n}^{a} \leq E_{\text{max}},~ \notag \\
    &\hspace{1.4cm}~\forall n \in [M_k],~a \in \aircraftset, \label{eq:TFMP-constraint-energy-lower-upper} \\
%
%
    &u_{v, n}^{a} - u_{v, n - 1}^{a} \leq w_{v,n}^{a} - w_{v,n - 1}^{a}, \notag \\
    &\hspace{1.4cm}~\forall n \in [M_k],~v \in \vertset,~a \in \aircraftset, \label{eq:TFMP-constraint-energy-can-recharge} \\
    &\sum_{p \in \routeset} (w_{o, n}^{a, p} - w_{o, n - k_c}^{a, p}) \leq 1 - (u_{v, n - k_c}^{a} - u_{v, n - k_c - 1}^{a}), \notag \\
    &\hspace{1.4cm}~\forall n \in \{k_{c} + 1,\ldots,M_k\},~v \in \vertset,~a \in \aircraftset, \label{eq:TFMP-constraint-recharging-time} \\
    &u_{v,n}^{a} \in \N_{0}, \notag \\
    &\hspace{1.4cm}~\forall n \in [M_k],~v \in \vertset,~a \in \aircraftset, \notag
\end{align}
\end{subequations}
where $E_{\text{min}}$ and $E_{\text{max}}$ are the minimum and maximum allowed battery charge, respectively, and $k_c$ is an upper-bound on the number of time steps it takes to fully re-charge a UAM vehicle. Constraint \eqref{eq:TFMP-constraint-energy-balance} is the balance equation for vehicle $a$'s state of charge, and constraint \eqref{eq:TFMP-constraint-u-increasing} forces the decision variable $u_{v, n}^{a}$ to be non-decreasing in time. Constraint \eqref{eq:TFMP-constraint-energy-lower-upper} limits the minimum and maximum state of charge for vehicle $a$. Finally, constraint \eqref{eq:TFMP-constraint-energy-can-recharge} ensures that vehicle $a$ can be recharged at a vertiport only if it has visited that vertiport, while constraint \eqref{eq:TFMP-constraint-recharging-time} ensures that a takeoff can occur only after $k_c$ time steps. 
\par
The initial values $w_{i,0}^{a,p}$, $w_{v,0}^{a}$, and $u_{v, 0}^{a}$ at the start of a cycle are determined by the location and state of charge of vehicle $a$ at the end of the previous cycle. In particular, if vehicle $a$ has occupied sector $i$ of O-D pair $p$ at the end of cycle $k-1$, then $w_{v,0}^{a}=0$ for all $v \in \vertset$, $w_{j,0}^{a,p} = 1$ for sector $j=i$ and any other sector $j \in \slotset{p}$ that precedes sector $i$ along the UAM vehicle's route, and $w_{j,0}^{a,q} = 0$ for all $q \neq p$ and $j \in \slotset{q}$. 
\begin{remark}
In our formulation, we implicitly assume that each UAM vehicle has a passenger capacity of one. However, given the batch-based nature of the VertiSync policy, it can be easily extended to accommodate vehicles with higher capacities. Specifically, for a passenger capacity of $C$, the right-hand side of \eqref{eq:TFMP-constraint-serviceall} can be replaced with $\Queuelength{p}(t_k)/C$ to account for the fact that each vehicle can serve up to $C$ passengers per trip.
\end{remark}
\begin{remark}
A major difference between our formulation and the traditional TFMP for commercial airplanes is the inclusion of rebalancing. Specifically, the variables $w_{v, n}^{a}$, which track the number of visits to vertiport $v$ by vehicle $a$, along with their corresponding constraints are unique to our formulation. 
\end{remark}
\begin{remark}
Note that VertiSync only uses real-time information about the number of outstanding trip requests, and does not require any information about the arrival rate. This makes VertiSync a suitable option for an actual UAM network where the arrival rate is unknown or could vary over time.
\end{remark}
\mpcommentout{
\begin{enumerate}
    \item solve the linear program 
        \begin{equation}\label{eq:linear-program}
        \begin{aligned}
            \text{Minimize}&~\sum_{i=1}^{\numsafesch}T_i \\
            \text{Subject to}&~ \sum_{i=1}^{\numsafesch}\routingvec{}{i}T_i \geq \Queuelength{}(t_k), \\
            &~ T_i \geq 0,~ i \in [\numsafesch],
        \end{aligned}
    \end{equation}
    where the inequality $\sum_{i=1}^{\numsafesch}\routingvec{}{i}T_i \geq \Queuelength{}(t_k)$ is considered component-wise. Let $T_{i}^{*}$, $i \in [\numsafesch]$, be the solution to \eqref{eq:linear-program}, and let $\mathcal{S}$ be the set of $\routingvec{}{i}$ for which $T_{i}^{*} > 0$.
    
    \item choose a vector $\routingvec{}{i} \in \mathcal{S}$ and determine a sequence of takeoff times at each vertipad allocated by $\routingvec{}{i}$ such that the takeoff rate is $1/\takeofftau$ and the safety margins and separation requirements are satisfied. Using the following steps, repeatedly schedule each UAM vehicle until for every O-D pair $p \in \routeset$, $\routingvec{p}{i}T_{i}^{*}$ requests are serviced:   
    \begin{enumerate}[label={\theenumi.\arabic*)}]
        \item \textbf{UAM vehicle distribution}: split the $A$ UAM vehicles among the O-D pairs according to some desired distribution. The assignment of each vehicle to an O-D pair can be done arbitrarily or according to some higher level logic.
        Each vehicle will operate for its assigned O-D pair while $\routingvec{}{i}$ is active. \footnote{If for some O-D pair $p$, we have $\routingvec{p}{i} > 0$, but no vehicle is assigned due to the shortage of UAM vehicles in the network, this step is repeated once all the other O-D pairs have serviced their requests.}
        \item \textbf{Synchronized service-and-rebalancing}: at time $t$: (i) if $t+\takeofftau$ is a feasible takeoff time for a vertipad, a UAM vehicle is available, and a takeoff at time $t+\takeofftau$ will not violate the safety margins and separation requirements with respect to the rebalancing vehicles, schedule the vehicle for takeoff at time $t+\takeofftau$, (ii) for each UAM vehicle that is out of balance, schedule its takeoff at the first available time that does not violate the safety margins and separation requirements with respect to all the UAM vehicles. Communicate the takeoff times to the vertiport operators.
    \end{enumerate}
    
    \item once all the requests corresponding to $\routingvec{}{i}$ have been serviced, remove $\routingvec{}{i}$ from $\mathcal{S}$. If $\mathcal{S}$ is non-empty, return to step 2. Otherwise, restart the algorithm for the $(k+1)$-th cycle. 
\end{enumerate}
}
\end{definition}
\mpcommentout{
\begin{example}
Consider again the setup in Example~\ref{ex:scheduling-vectors}, and suppose that the number of trip requests for each O-D pair at time $t_1=0$ is as follows: $\Queuelength{1}(0)=10$, $\Queuelength{2}(0)=20$, $\Queuelength{3}(0)=5$, and $\Queuelength{4}(0)=10$. Then, the solution to the linear program \eqref{eq:linear-program} is $T_{1}^{*}=T_{2}^{*}=T_{3}^{*}=T_{4}^{*}=0$, $T_{5}^{*}=10$, and $T_{6}^{*}=20$. Hence, under the VertiSync policy, the service vector $\routingvec{}{5}=(1,0,0,1)$ will become active to service $10$ requests for each of the O-D pairs $(1,3)$ and $(2,4)$. Once all these requests have been serviced, the service vector $\routingvec{}{6}=(0,1,1,0)$ will become active to service $20$ requests for the O-D pair $(1,4)$ and $5$ requests for the O-D pair $(2,3)$. The first cycle ends once all the requests have been serviced. Note that the cycle length in this example is $(T_{5}^{*} + T_{6}^{*})\takeofftau = 150$ minutes plus any time required to rebalance the UAM vehicles in the network. 
\end{example}
}
\subsection{Size of VertiSync Formulation}\label{sec:problem-size}
In this section, we characterize the size of the optimization problem \eqref{eq:TFMP-objective}-\eqref{eq:TFMP-energy-constraints}, and we describe a pre-processing technique to reduce its size.
\par
Recall that $\vertcard{}$ denotes the number of vertiports, $\routecard$ denotes the number of O-D pairs, $\slotset{p}$ denotes the number of sectors associated with O-D pair $p$, and $\aircraftcard$ denotes the number of UAM vehicles. Moreover, $M_k$ is an integer that determines an upper-bound on the length of the $k$-th cycle. The total number of variables $w_{i,n}^{a,p}$ is $M_k \aircraftcard \sum_{p \in \routeset} |S_p|$, and the total number of variables $w_{v, n}^{a}$ and $u_{v, n}^{a}$ is $2M_k \aircraftcard \vertcard{}$. The number of constraints is upper-bounded by:
\begin{equation*}
    \routecard + M_k \left( 2 \aircraftcard + 1 \right) \sum_{p \in \routeset}|S_p| + M_k \left( 2 \vertcard{} + 5 \vertcard{} \aircraftcard + 2 \aircraftcard\right).
\end{equation*}
\par 
In order to get a sense of the size of the formulation, let us consider the following example:
\begin{example}\label{ex:number-of-constraints}
Consider the UAM network shown in Figure~\ref{fig:slots}. We have $\vertcard{} = 4$, $\routecard = 8$, $\sum_{p \in \routeset}|S_p| = 40$. Let $\aircraftcard = 5$ and $M_k = 100$. Then, the number of variables is 22,000 and the number of constraints is at most 55,808. 
\end{example}
\par
We can reduce the size of the optimization problem by concatenating some of the constraints. In particular, suppose that the route corresponding to O-D pair $p$ does not conflict with any other routes, except at the origin or destination vertiports. Then, we may remove the variables $w_{i,n}^{a,p}$, $i \neq o, d$, and their corresponding constraints and concatenate \eqref{eq:TFMP-constraint-aircraft-next-slot} into the following constraint:
\begin{equation*}
        w_{o,n-|S_p|}^{a,p}=w_{d,n}^{a,p}, ~\forall n \in \{|S_p|, \cdots, M_k\},~p \in \routeset, ~a \in \aircraftset. 
\end{equation*}
\par
Using the above pre-processing technique, the number of variables in Example~\ref{ex:number-of-constraints} is reduced to 14,000 and the number constraints to 34,708.

\subsection{VertiSync Throughput}
We next characterize the throughput of VertiSync. To do this, we introduce the notion of \emph{service vector}. A service vector is a $\routecard$-dimensional vector $\routingvec{}{}=(\routingvec{p}{})$ that specifies which O-D pairs can the UAM vehicles take off from and at what rate, so that the airborne safety margins and separation requirements are not violated. In particular, if $\routingvec{p}{} \neq 0$, then it means that UAM vehicles can safely takeoff at the rate $\routingvec{p}{}$ for O-D pair $p$. If $\routingvec{p}{} = 0$, then the takeoff rate for O-D pair $p$ is zero.
\par
Recall from the operational constraints in Section~\ref{section:op-constraints} that the takeoff rate from each vertipad is at most $1$ per $\takeofftau$ minutes, the takeoff rate from each vertiport is at most $1$ per $\cruisetau$, and $k_{\tau} = \takeofftau / \cruisetau$ is integer-valued. Therefore, if vertiport $v$ has $N_v$ vertipads, the takeoff rate from vertiport $v$ can be $0$, $\cruisetau/\takeofftau$, $2\cruisetau/\takeofftau$, \ldots, $\max\{N_v\cruisetau/\takeofftau, 1\}$ per time step. For example, a vector $\routingvec{}{i}$ with $\routingvec{p}{i} = \cruisetau/\takeofftau$ and $\routingvec{q}{i} = 0$ for all $q \neq p$ is a valid service vector since UAM vehicles can safely take off from O-D pair $p$ at the rate of $\cruisetau/\takeofftau$ vehicle per time step. 
We let $\safeschset$ be the set of all such non-zero service vectors. 
\begin{example}\label{ex:scheduling-vectors}
Consider the network in Figure~\ref{fig:slots}, which has $8$ O-D pairs $(1,3)$, $(1,4)$, $(2,3)$, $(2,4)$, $(3,1)$, $(4,2)$, $(1,2)$, and $(2,1)$ that we number from $1$ to $8$, respectively. Let the takeoff separation be $\takeofftau = 5$ minutes, and $\cruisetau = 0.5$ minutes and suppose that each vertiport has only one vertipad. Therefore, the takeoff rate from each vertiport is at most $\cruisetau/\takeofftau = 0.1$ vehicle per time step $\tau_c$. Moreover, note that O-D pairs $1$ and $4$ share a common link along their routes. However, if a UAM vehicle for O-D pair $1$ takes off at $t=0$, then another UAM vehicle for O-D pair $4$ can take off at $t=\cruisetau$ without violating the airborne safety margins, as it will not occupy the same sector with the first vehicle. Hence, $\routingvec{}{1}=(0.1,0,0,0.1,0,0,0,0)$ is a service vector. Similarly, $\routingvec{}{2}=(0,0.1,0.1,0,0,0,0,0)$ and $\routingvec{}{3}=(0,0.1,0,0,0,0,0,0)$ are two other service vectors. However, $(0.1,0.1,0,0,0,0,0,0)$ is not a service vector since UAM vehicles cannot simultaneously take off from vertiport $1$ at the rate of $0.1$ vehicle per time step.
\end{example}
\par
By using the service vectors described earlier, a feasible solution to the optimization problem \eqref{eq:TFMP-objective}-\eqref{eq:TFMP-energy-constraints} can be constructed as follows: (i) activate at most one service vector $\routingvec{}{i} \in \safeschset$ at any time, (ii) while $\routingvec{}{i}$ is active, schedule available UAM vehicles to take off at the rate $\routingvec{p}{i}$ for any O-D pair $p \in \routeset$, (iii) switch to another service vector in $\safeschset$ for servicing outstanding requests and/or rebalancing, provided that the safety margins with respect to airborne UAM vehicles from previous service vector are not violated, and (iv) repeat (i)-(iii) until all outstanding requests for the $k$-th cycle are serviced. 
\par
The next theorem provides an inner-estimate for the throughput of VertiSync when the number of UAM vehicles $\aircraftcard$ is sufficiently large and the following ``reversibility" assumption holds: 
\begin{assumption}\label{assumption:reversibility}
    (\textit{Reversibility}) For every service vector $\routingvec{}{i} \in \safeschset$, there exists a service vector $\routingvec{}{j} \in \safeschset$ such that for all $p \in \routeset$ with $\routingvec{p}{i} > 0$, $\routingvec{q}{j} = \routingvec{p}{i}$, where $q = (d_p,o_p)$ is the opposite O-D pair to the pair $p$. In other words, if all the O-D pairs in $\routingvec{p}{i}$ are ``reversed", then the resulting vector is also a service vector.
\end{assumption}
We define a service vector $\routingvec{}{i} \in \safeschset$ as ``symmetric" if, for all $p \in \routeset$, $\routingvec{q}{i} = \routingvec{p}{i}$, where $q = (d_p,o_p)$ is the opposite O-D pair to $p$. In other words, using the service vector $\routingvec{}{i}$, UAM vehicles can continuously take off at the same rate for both the O-D pair $p$ and its opposite pair $q$, without violating the safety margins and separation requirements. If a service vector is not symmetric but can be extended into a symmetric service vector, then we exclude it from the set $\safeschset$ without loss of generality to avoid redundancy. Specifically, given $\routingvec{}{i} \in \safeschset$, if there exists a symmetric service vector $\routingvec{}{j} \in \safeschset$ such that $\routingvec{p}{j} = \routingvec{p}{i}$ for all $p \in \routeset$ with $\routingvec{p}{i} > 0$, then we exclude $\routingvec{}{i}$ from $\safeschset$. Note that if a service vector is symmetric, it automatically satisfies the reversibility requirement in Assumption~\ref{assumption:reversibility} but the reverse is not true as seen in the following example.
\begin{figure}[t]
    \centering
    \includegraphics[width=0.3\textwidth]{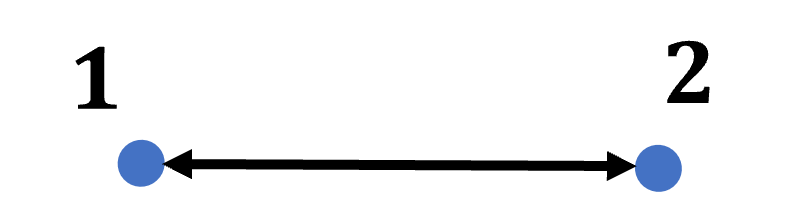}
    
    \vspace{0.1 cm}
    
    \caption{\sf A UAM network with $2$ vertiports (blue circles) and $2$ O-D pairs $(1,2)$ and $(2,1)$ sharing a single route (shown as a double-headed arrow).}
    \label{fig:graph-example-2}
\end{figure}
\begin{example}\label{ex:reversibe-non-symmetric-network}
Consider the simple network in Figure~\ref{fig:graph-example-2}, where there are only two O-D pairs $(1,2)$ and $(2,1)$ that share a single route. Suppose that each vertiport has only one vertipad, the takeoff separation is $\takeofftau = 5$ minutes, and $\cruisetau = 0.5$ minutes. Since there is only a single route, when a UAM vehicle is traveling in one direction, then no UAM vehicles can travel in the opposite direction. Therefore, there are only two service vectors $\routingvec{}{1}=(0.1,0)$ and $\routingvec{}{2}=(0,0.1)$. As can be seen, the network is reversible but not symmetric.
\end{example}
\begin{theorem}\label{thm:renewal-sufficient-reversable}
If the UAM network satisfies the reversibility Assumption~\ref{assumption:reversibility}, and the number of UAM vehicles satisfies
\begin{equation*}
    \aircraftcard \geq \max_{i \in [|\safeschset|]}\frac{\sum_{p \in \routeset}\routingvec{p}{i}}{\min_{\substack{p \in \routeset: \\ \routingvec{p}{i} > 0}}\routingvec{p}{i}},
\end{equation*}
then the VertiSync policy can keep the network under-saturated for demands belonging to the set 
\begin{equation*}
    D_{1}^{\circ}=\{\Arrivalrate{}: \Arrivalrate{} < \sum_{i = 1}^{\numsafesch}\frac{\routingvec{}{i}x_i}{1 + c_i},~ x_i \geq 0,~ i \in [\numsafesch],~ \sum_{i = 1}^{\numsafesch}x_i \leq 1\},
\end{equation*}
where 
\begin{equation}\label{eq:c_i-definition}
    \begin{aligned}
        c_i &= I_i +\\
        &(1 + I_i)\max_{p \in \routeset}\max\{\frac{T_p}{\cruisetau} + k_c - \frac{\aircraftcard}{\sum_{q}\routingvec{q}{i}}, \frac{T_p}{\cruisetau}I_{i}\}\frac{\sum_{p}\routingvec{p}{i}}{\aircraftcard}, \\
        I_i &= \begin{cases*}
           1 & ~$\routingvec{}{i}$~\text{non-symmetric} \\
           0 & ~$\routingvec{}{i}$~\text{symmetric}
        \end{cases*},
    \end{aligned}
\end{equation}
and the vector inequality $\Arrivalrate{} < \sum_{i = 1}^{\numsafesch}\routingvec{}{i} x_i/(1+c_i)$ is considered component-wise.
\end{theorem}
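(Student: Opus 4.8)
The plan is to study the embedded Markov chain $\{\Queuelength{}(t_k)\}_{k\ge 1}$ recording the queue vector at the cycle boundaries and to establish a negative Foster--Lyapunov drift for it. The key structural fact is that VertiSync clears, during the $k$-th cycle, exactly the requests outstanding at $t_k$, so the queue at the next boundary consists \emph{only} of arrivals in $(t_k,t_{k+1}]$. Writing $\tau_k$ for the cycle length (in time steps), $\tau_k$ is determined by $\Queuelength{}(t_k)$ and the vehicle configuration at $t_k$ and is independent of the fresh arrivals; hence, conditionally, $\Queuelength{p}(t_{k+1})$ is $\mathrm{Binomial}(\tau_k,\Arrivalrate{p})$ and independent across $p$, so $\E{\Queuelength{p}(t_{k+1})\mid \Queuelength{}(t_k)}=\Arrivalrate{p}\,\tau_k$. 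The whole argument therefore reduces to controlling $\tau_k$ as a function of $\Queuelength{}(t_k)$. Fixing $\Arrivalrate{}\in D_1^{\circ}$, I extract $x_i\ge 0$ with $\sum_i x_i\le 1$ and $\Arrivalrate{p}<\mu_p:=\sum_{i}\routingvec{p}{i}x_i/(1+c_i)$ for every $p$, set $\delta:=\min_{p:\mu_p>0}(1-\Arrivalrate{p}/\mu_p)>0$, and choose the Lyapunov function $\Lyap{\Queuelength{}}:=\max_{p:\mu_p>0}\Queuelength{p}/\mu_p$ (pairs with $\mu_p=0$ have $\Arrivalrate{p}=0$ and contribute nothing).

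The core step, and the main obstacle, is the cycle-length bound
\begin{equation*}
\tau_k \;\le\; \Lyap{\Queuelength{}(t_k)} + O(1).
\end{equation*}
To prove it I use the explicit service-vector schedule described just before the theorem as a certificate for the optimization \eqref{eq:TFMP-objective}--\eqref{eq:TFMP-energy-constraints}: allocate to each $\routingvec{}{i}$ a block of $n_i:=\bigl(x_i/(1+c_i)\bigr)\Lyap{\Queuelength{}(t_k)}$ time steps of \emph{productive} operation, during which available vehicles take off at rate $\routingvec{p}{i}$ for each $p$. The number serviced for pair $p$ is then $\sum_i\routingvec{p}{i}n_i=\mu_p\,\Lyap{\Queuelength{}(t_k)}\ge \Queuelength{p}(t_k)$, so all outstanding requests clear, while the total \emph{real} time consumed, once the per-block overhead is charged, is $\sum_i(1+c_i)n_i=\bigl(\sum_i x_i\bigr)\Lyap{\Queuelength{}(t_k)}\le \Lyap{\Queuelength{}(t_k)}$. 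The hard part is showing that a productive block of length $n_i$ occupies at most $(1+c_i)n_i$ real time steps, and this is exactly where both hypotheses enter. The fleet-size bound $\aircraftcard\ge \max_i \sum_p\routingvec{p}{i}/\min_{p:\routingvec{p}{i}>0}\routingvec{p}{i}$ guarantees enough vehicles to populate all takeoff slots of a single $\routingvec{}{i}$ over one period of its slowest active pair, so the nominal rates $\routingvec{p}{i}$ are sustainable; and the reversibility Assumption~\ref{assumption:reversibility} supplies, for each non-symmetric $\routingvec{}{i}$, the reversed service vector needed to fly the emptied vehicles back and rebalance them. A Little's-law accounting of vehicle occupancy then reproduces exactly the overhead $c_i$ of \eqref{eq:c_i-definition}: the term $I_i$ with its prefactor $(1+I_i)$ charges the empty rebalancing leg incurred only by non-symmetric vectors, while the remaining factor charges the vehicle idle time appearing whenever the flight-plus-recharge duration $T_p/\cruisetau+k_c$ exceeds the per-vehicle budget $\aircraftcard/\sum_q\routingvec{q}{i}$ that a fleet of $\aircraftcard$ vehicles can offer at aggregate takeoff rate $\sum_q\routingvec{q}{i}$. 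The $O(1)$ term absorbs integer rounding of the $n_i$ together with the bounded cost of positioning/charging vehicles from the configuration inherited at $t_k$; since this cost is uniformly bounded, the drift argument below applies to the full chain (queues together with vehicle configuration), consistent with the stated independence of throughput from the initial condition.

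With the cycle-length bound in hand the drift computation is routine. From $\E{\Queuelength{p}(t_{k+1})\mid\Queuelength{}(t_k)}=\Arrivalrate{p}\tau_k\le(1-\delta)\mu_p\,\Lyap{\Queuelength{}(t_k)}$ I get $\E{\Queuelength{p}(t_{k+1})}/\mu_p\le(1-\delta)\Lyap{\Queuelength{}(t_k)}$ for each $p$; bounding the expected maximum over the finitely many O-D pairs by the mean plus the Binomial fluctuation, which is $O(\sqrt{\tau_k})=O(\sqrt{\Lyap{\Queuelength{}(t_k)}})$, yields
\begin{equation*}
\E{\Lyap{\Queuelength{}(t_{k+1})}\mid\Queuelength{}(t_k)}\le(1-\delta)\Lyap{\Queuelength{}(t_k)}+C\sqrt{\Lyap{\Queuelength{}(t_k)}}.
\end{equation*}
Hence there is a threshold $f_0$ with $\E{\Lyap{\Queuelength{}(t_{k+1})}-\Lyap{\Queuelength{}(t_k)}\mid\Queuelength{}(t_k)}\le-\tfrac{\delta}{2}\Lyap{\Queuelength{}(t_k)}$ whenever $\Lyap{\Queuelength{}(t_k)}\ge f_0$, while the one-cycle increase is uniformly bounded on $\{\Lyap{}<f_0\}$ (there the cycle length, and thus the number of arrivals, is bounded).

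Applying the Foster--Lyapunov theorem gives $\limsup_k\E{\Lyap{\Queuelength{}(t_k)}}<\infty$, hence $\limsup_k\E{\Queuelength{p}(t_k)}<\infty$ for every $p\in\routeset$. Finally, to pass from boundedness at cycle boundaries to all times, note that within the $k$-th cycle $\Queuelength{p}(t)\le\Queuelength{p}(t_k)+\tau_k\le\Queuelength{p}(t_k)+\Lyap{\Queuelength{}(t_k)}+O(1)$, and both terms are bounded in expectation; therefore $\limsup_{t\to\infty}\E{\Queuelength{p}(t)}<\infty$ for all $p$, i.e.\ $\Arrivalrate{}$ lies in the under-saturation region of VertiSync, which is the claim. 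I expect essentially all of the genuine difficulty to sit in the second paragraph, namely the exact identification of the per-block overhead with $c_i$ under the two hypotheses; once that bound is secured, the renewal/drift machinery is standard.
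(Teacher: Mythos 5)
Your proposal follows essentially the same route as the paper's proof: both construct a feasible cycle schedule by sequentially activating service vectors with the rebalancing/recharging overhead $c_i$ (which is exactly where reversibility and the fleet-size bound enter), bound the cycle length linearly in the outstanding queue, and close with a Foster--Lyapunov drift argument on the embedded chain at cycle boundaries. The only differences are cosmetic: the paper takes the squared cycle length $\Cyclelength^2(k)$ as the Lyapunov function and obtains the drift via the strong law of large numbers and Fatou's lemma, whereas you use the linear function $\max_p \Queuelength{p}/\mu_p$ with a direct binomial moment bound.
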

\begin{proof}
See Appendix~\ref{section:proof-renewal-sufficient}.
\end{proof}
A special case of Theorem~\ref{thm:renewal-sufficient-reversable} is when the UAM network is symmetric, i.e., all service vectors are symmetric, and the number of UAM vehicles is sufficiently large such that the inner maximum in $c_i$ is equal to zero.

\begin{corollary}\label{thm:renewal-sufficient-sym}
If the UAM network is symmetric, i.e., all service vectors in $\safeschset$ are symmetric, and the number of UAM vehicles satisfies 
\begin{equation*}
    \aircraftcard \geq \max_{i \in [|\safeschset]|}\sum_{p \in \routeset}\routingvec{p}{i}\max\{\frac{1}{\min_{\substack{p \in \routeset: \\ \routingvec{p}{i} > 0}}\routingvec{p}{i}}, 
    \max_{p \in \routeset}\frac{T_p}{\cruisetau} + k_c\},
\end{equation*}
then the VertiSync policy can keep the network under-saturated for demands belonging to the set 
\begin{equation*}
    D_{2}^{\circ}=\{\Arrivalrate{}: \Arrivalrate{} < \sum_{i = 1}^{\numsafesch}\routingvec{}{i} x_i,~\text{\small for}~ x_i \geq 0,~ i \in [\numsafesch],~ \sum_{i = 1}^{\numsafesch}x_i \leq 1\},
\end{equation*}
where the vector inequality $\Arrivalrate{} < \sum_{i = 1}^{\numsafesch}\routingvec{}{i} x_i$ is considered component-wise.
\end{corollary}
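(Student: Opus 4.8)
The plan is to derive this corollary directly from Theorem~\ref{thm:renewal-sufficient-reversable} rather than re-running the queueing argument from scratch. Two things need to be established: that the symmetry hypothesis lets me invoke the theorem at all, and that the strengthened fleet-size bound drives every $c_i$ to zero, so that the general region $D_1^\circ$ collapses onto the claimed region $D_2^\circ$.

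First I would verify applicability. Since the corollary assumes every service vector in $\safeschset$ is symmetric, and (as noted in the discussion preceding Theorem~\ref{thm:renewal-sufficient-reversable}) a symmetric vector satisfies the reversibility requirement by taking $\routingvec{}{j}=\routingvec{}{i}$ itself, Assumption~\ref{assumption:reversibility} holds automatically. Thus Theorem~\ref{thm:renewal-sufficient-reversable} applies as soon as its fleet-size condition is met, which I check at the end.

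The core step is to simplify $c_i$. Symmetry means $I_i=0$ for all $i\in[\numsafesch]$, so in \eqref{eq:c_i-definition} the additive $I_i$ and the second argument $\frac{T_p}{\cruisetau}I_i$ of the inner $\max$ both vanish and the $(1+I_i)$ prefactor becomes one, leaving
\begin{equation*}
    c_i=\frac{\sum_{p\in\routeset}\routingvec{p}{i}}{\aircraftcard}\,\max_{p\in\routeset}\max\left\{\frac{T_p}{\cruisetau}+k_c-\frac{\aircraftcard}{\sum_{q}\routingvec{q}{i}},\,0\right\}.
\end{equation*}
Because the prefactor is strictly positive, $c_i=0$ exactly when the outer maximum is nonpositive, i.e.\ when $\frac{T_p}{\cruisetau}+k_c\le \aircraftcard/\sum_q\routingvec{q}{i}$ for every $p$. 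I would obtain this from the corollary's fleet-size hypothesis: selecting the second argument of the outer $\max\{\cdot,\cdot\}$ there gives, for each $i$, $\aircraftcard\ge \sum_{p}\routingvec{p}{i}\big(\max_{p}\frac{T_p}{\cruisetau}+k_c\big)$, which rearranges to $\aircraftcard/\sum_q\routingvec{q}{i}\ge \max_p\frac{T_p}{\cruisetau}+k_c\ge \frac{T_p}{\cruisetau}+k_c$ for all $p$. Hence each inner first argument is nonpositive and $c_i=0$ for all $i$.

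With $c_i=0$ throughout, the region $D_1^\circ$ of Theorem~\ref{thm:renewal-sufficient-reversable} is literally $D_2^\circ$. It remains only to confirm that the single combined fleet-size bound in the corollary also implies the bound $\aircraftcard\ge\max_i \sum_p\routingvec{p}{i}/\min_{p:\routingvec{p}{i}>0}\routingvec{p}{i}$ demanded by the theorem; this is immediate since the corollary's bound takes the larger of the two arguments inside $\max\{\cdot,\cdot\}$, the first of which is precisely that quantity. I expect no genuine difficulty here: the only step needing care is the bookkeeping, namely checking that one combined inequality simultaneously (i) meets the theorem's fleet-size requirement and (ii) forces the inner maximum in $c_i$ to vanish, and that these two roles correspond exactly to the two arguments of the $\max\{\cdot,\cdot\}$ in the corollary's hypothesis.
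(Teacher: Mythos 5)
Your proposal is correct and follows essentially the same route as the paper's own proof: use symmetry to set $I_i=0$, use the second argument of the fleet-size bound to make the first term of the inner maximum in \eqref{eq:c_i-definition} nonpositive so that $c_i=0$ for all $i$, and then invoke Theorem~\ref{thm:renewal-sufficient-reversable} with $D_1^{\circ}$ reducing to $D_2^{\circ}$. Your version is merely more explicit than the paper's in checking that symmetry implies Assumption~\ref{assumption:reversibility} and that the combined bound also covers the theorem's separate fleet-size requirement, both of which the paper leaves implicit.
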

\begin{proof}
If the network is symmetric and the number of UAM vehicles satisfies
\begin{equation*}
    \aircraftcard \geq \max_{i \in [|\safeschset]|} \sum_{p \in \routeset}\routingvec{p}{i}\max_{p \in \routeset}\{\frac{T_p}{\cruisetau} + k_c\},
\end{equation*}
then the first term of the inner maximum in \eqref{eq:c_i-definition} becomes zero. Moreover, since $I_i = 0$, it follows that $c_i = 0$ for all $i \in [\numsafesch]$. The result then follows from Theorem~\ref{thm:renewal-sufficient-reversable}.
\mpcommentout{
The proof can be found in the arXiv version of this paper. 
}
\end{proof}
\begin{example}\label{ex:illustration-of-sufficient-theorem}
(\textbf{Example \ref{ex:reversibe-non-symmetric-network} cont'd}) Consider again the network in Figure~\ref{fig:graph-example-2}, where we number the two O-D pairs $(1,2)$ and $(2, 1)$ as $1$ and $2$, respectively. Let $T_1 = T_2 = 8$ minutes, and $k_c = 10$. Since both service vectors $\routingvec{}{1}=(0.1,0)$ and $\routingvec{}{2}=(0,0.1)$ are reversible and non-symmetric, we have $I_1 = I_2 = 1$. Moreover, we let $\aircraftcard = 32$, which satisfies the lower bound on $\aircraftcard$ from Theorem~\ref{thm:renewal-sufficient-reversable} since
\begin{equation*}
    \aircraftcard > \max_{i = 1, 2} \frac{0.1 + 0}{0.1} = 1.
\end{equation*}
\par
We now calculate $c_1$ and $c_2$ from Theorem~\ref{thm:renewal-sufficient-reversable} as follows:
\begin{equation*}
\begin{aligned}
    c_1 &= c_2 = 1 + 2\max_{p \in \routeset}\max\{\frac{8}{0.5} + 10 - \frac{32}{0.1}, \frac{8}{0.5}\}\frac{0.1}{32} \\
    &= 1.1.
\end{aligned}
\end{equation*}
Therefore, the set $D_{1}^{\circ}$ from Theorem~\ref{thm:renewal-sufficient-reversable} is:
\begin{equation*}
    \begin{aligned}
        &D_{1}^{\circ}=\{(\Arrivalrate{1}, \Arrivalrate{2}): \Arrivalrate{1} < \frac{x_1}{21}, \Arrivalrate{2} < \frac{x_2}{21}, \\
        &\hspace{1.7cm}x_1, x_2 \geq 0,~ x_1 + x_2 \leq 1\}.
    \end{aligned}
\end{equation*}
\end{example}
\subsection{Fundamental Limit on Throughput}
In this section, we provide an outer-estimate for the throughput of any conflict-free takeoff scheduling policy. A conflict-free policy is a policy that guarantees before takeoff that each UAM vehicle's entire route will be clear and a vertipad will be available for landing. 
\par
Any conflict-free policy uses the service vectors in $\safeschset$, either explicitly or implicitly, to schedule the UAM vehicles.
\mpcommentout{
This is done by activating one or multiple service vectors from $\safeschset$, scheduling the aircraft to take off at the rates specified by the activated service vectors, and switching between service vectors provided that the safety margins and separation requirements are not violated after switching.
}
Although it is possible for a conflict-free policy to activate multiple service vectors at any time, we may restrict ourselves to policies that activate at most one service vector from $\safeschset$ at any time. To justify this, we note that by activating at most one service vector at a time and rapidly switching between service vectors while ensuring that airborne safety margins are not violated, it is possible to achieve an exact or arbitrarily close approximation of any conflict-free schedule. 
\par
The next result provides a fundamental limit on the throughput of any conflict-free policy. This limit does not account for rebalancing or the number of available UAM vehicles.
\begin{theorem}\label{thm:necessary}
If a conflict-free policy $\pi$ keeps the network under-saturated, then the demand must belong to the set 
\begin{equation*}
\begin{aligned}
    D=\{\Arrivalrate{}: \Arrivalrate{} \leq \sum_{i = 1}^{\numsafesch}\routingvec{}{i} x_i,~\text{\small for}~ x_i \geq 0, ~i \in [\numsafesch],~ \sum_{i = 1}^{\numsafesch}x_i \leq 1\},
\end{aligned}
\end{equation*}
where the vector inequality $\Arrivalrate{} \leq \sum_{i = 1}^{\numsafesch}\routingvec{}{i} x_i$ is considered component-wise.
\end{theorem}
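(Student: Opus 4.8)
The plan is to treat this as a standard capacity-region (necessary stability) argument: boundedness of the expected queue forces the long-run service rate to dominate the arrival rate, and that service rate is in turn constrained by the service vectors in $\safeschset$. First I would invoke the reduction already justified immediately before the statement, so that without loss of generality $\pi$ activates at most one service vector from $\safeschset$ at any instant. For a fixed realization of the arrival process, let $A_p(t)$ be the cumulative number of requests for O-D pair $p$ in $(0,t]$ and let $D_p(t)$ be the cumulative number of services (takeoffs). The flow-conservation identity $\Queuelength{p}(t) = \Queuelength{p}(0) + A_p(t) - D_p(t)$ holds pathwise; taking expectations and using that arrivals are Bernoulli$(\Arrivalrate{p})$, so $\E{A_p(t)} = \Arrivalrate{p} t$, yields $\E{D_p(t)} = \Queuelength{p}(0) + \Arrivalrate{p} t - \E{\Queuelength{p}(t)}$.

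Next I would bound departures from above by the time spent on each service vector. Let $T_i(t)$ be the total time in $(0,t]$ during which $\routingvec{}{i}$ is active. Since $\routingvec{p}{i}$ is by definition the maximum conflict-free takeoff rate for pair $p$ while $\routingvec{}{i}$ is active, the pair-$p$ takeoffs accrued during that time are at most $\routingvec{p}{i} T_i(t)$ up to an $O(1)$ boundary term from vehicles in transit at switching instants, giving $D_p(t) \le \sum_{i=1}^{\numsafesch} \routingvec{p}{i} T_i(t) + C_p$ for a constant $C_p$ independent of $t$. Because at most one service vector is active at a time, $\sum_{i=1}^{\numsafesch} T_i(t) \le t$ pathwise, hence $\sum_i \E{T_i(t)} \le t$.

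Combining the two facts and dividing by $t$, I would obtain
\begin{equation*}
\Arrivalrate{p} \le \sum_{i=1}^{\numsafesch} \routingvec{p}{i}\, \frac{\E{T_i(t)}}{t} + \frac{\E{\Queuelength{p}(t)} - \Queuelength{p}(0) + C_p}{t}.
\end{equation*}
The vector $x(t) := \left( \E{T_i(t)}/t \right)_i$ lies in the compact set $\{x : x_i \ge 0,\ \sum_i x_i \le 1\}$. Under-saturation gives $\limsup_{t\to\infty} \E{\Queuelength{p}(t)} < \infty$, so the final term vanishes as $t \to \infty$. Extracting a subsequence $t_n \to \infty$ along which $x(t_n) \to x^\ast$ (possible by compactness), the limit satisfies $x_i^\ast \ge 0$, $\sum_i x_i^\ast \le 1$, and $\Arrivalrate{p} \le \sum_i \routingvec{p}{i} x_i^\ast$ for every $p$, i.e. $\Arrivalrate{} \in D$.

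The main obstacle is the departure bound $D_p(t) \le \sum_i \routingvec{p}{i} T_i(t) + C_p$: it requires arguing that at every instant the set of simultaneously departing vehicles respects the safety and separation constraints and is therefore dominated by some service vector, and that switching between service vectors while honoring airborne margins introduces only a bounded, $t$-independent overhead $C_p$. The remaining ingredients — the flow identity, the expectation computation for the Bernoulli arrivals, and the compactness/subsequence extraction — are routine.
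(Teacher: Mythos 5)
Your proof is correct, and it rests on the same key decomposition as the paper's: reduce to policies activating one service vector at a time, and bound cumulative pair-$p$ departures by $\sum_{i}\routingvec{p}{i}T_i(t)$ where the $T_i(t)$ sum to at most $t$. The execution differs, though. The paper argues by contradiction and pathwise: it assumes $\Arrivalrate{}\notin D$, forms the time-fractions $x_i(s_k)$, sets $x_i:=\limsup_k x_i(s_k)$ componentwise, and uses the strong law of large numbers to show $\Queuelength{p}(s_k)\to\infty$ almost surely for the violating pair $p$. You instead work directly in expectation: the flow identity plus $\E{A_p(t)}=\Arrivalrate{p}t$ gives $\Arrivalrate{p}\le\sum_i\routingvec{p}{i}\E{T_i(t)}/t+o(1)$, and a single convergent subsequence of the vector $\bigl(\E{T_i(t)}/t\bigr)_i$ produces a feasible $x^\ast$ witnessing $\Arrivalrate{}\in D$. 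Your route buys two things: it needs only the mean of the arrival process rather than an SLLN, and the joint subsequence extraction cleanly guarantees $\sum_i x_i^\ast\le 1$, whereas the paper's componentwise $\limsup$ does not obviously satisfy that constraint (limsups taken separately can sum to more than one), so your version is actually tighter on that point. The paper's pathwise contradiction, for its part, delivers the slightly stronger conclusion that the queue diverges almost surely, not just in expectation. The one step you rightly flag as nontrivial, namely $D_p(t)\le\sum_i\routingvec{p}{i}T_i(t)+C_p$, is treated by the paper as essentially definitional ("when $\routingvec{}{i}$ is active, requests for $p$ are serviced at rate at most $\routingvec{p}{i}$"), so you are not missing anything the paper supplies.
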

\begin{proof}
See Appendix \ref{section:proof-necessary}.
\end{proof}
\begin{remark}
    If the UAM network is symmetric and the number of UAM aircraft meets the lower bound in Corollary~\ref{thm:renewal-sufficient-sym}, then $D_2^{\circ}$ in Corollary~\ref{thm:renewal-sufficient-sym} is equal to $D$ \footnote{More precisely, $D_2^{\circ} = D$ \emph{almost} everywhere.}. Consequently, in this case, VertiSync serves as a policy that maximizes throughput. 
\end{remark}
\begin{example}\label{ex:illustration-of-necessary-theorem}
(\textbf{Examples \ref{ex:reversibe-non-symmetric-network} and \ref{ex:illustration-of-sufficient-theorem} cont'd}) For the parameters given in the previous examples, the set $D$ from Theorem~\ref{thm:necessary} is:
\begin{equation*}
    \begin{aligned}
        &D=\{(\Arrivalrate{1}, \Arrivalrate{2}): \Arrivalrate{1} < \frac{x_1}{10}, \Arrivalrate{2} < \frac{x_2}{10}, \\
        &\hspace{1.7cm}x_1, x_2 \geq 0,~ x_1 + x_2 \leq 1\}.
    \end{aligned}
\end{equation*}
\par
As can be seen, $D_{1}^{\circ} \subset D$.
\end{example}
\mpcommentout{
\begin{figure}[t]
    \centering
    \includegraphics[width=0.38\textwidth]{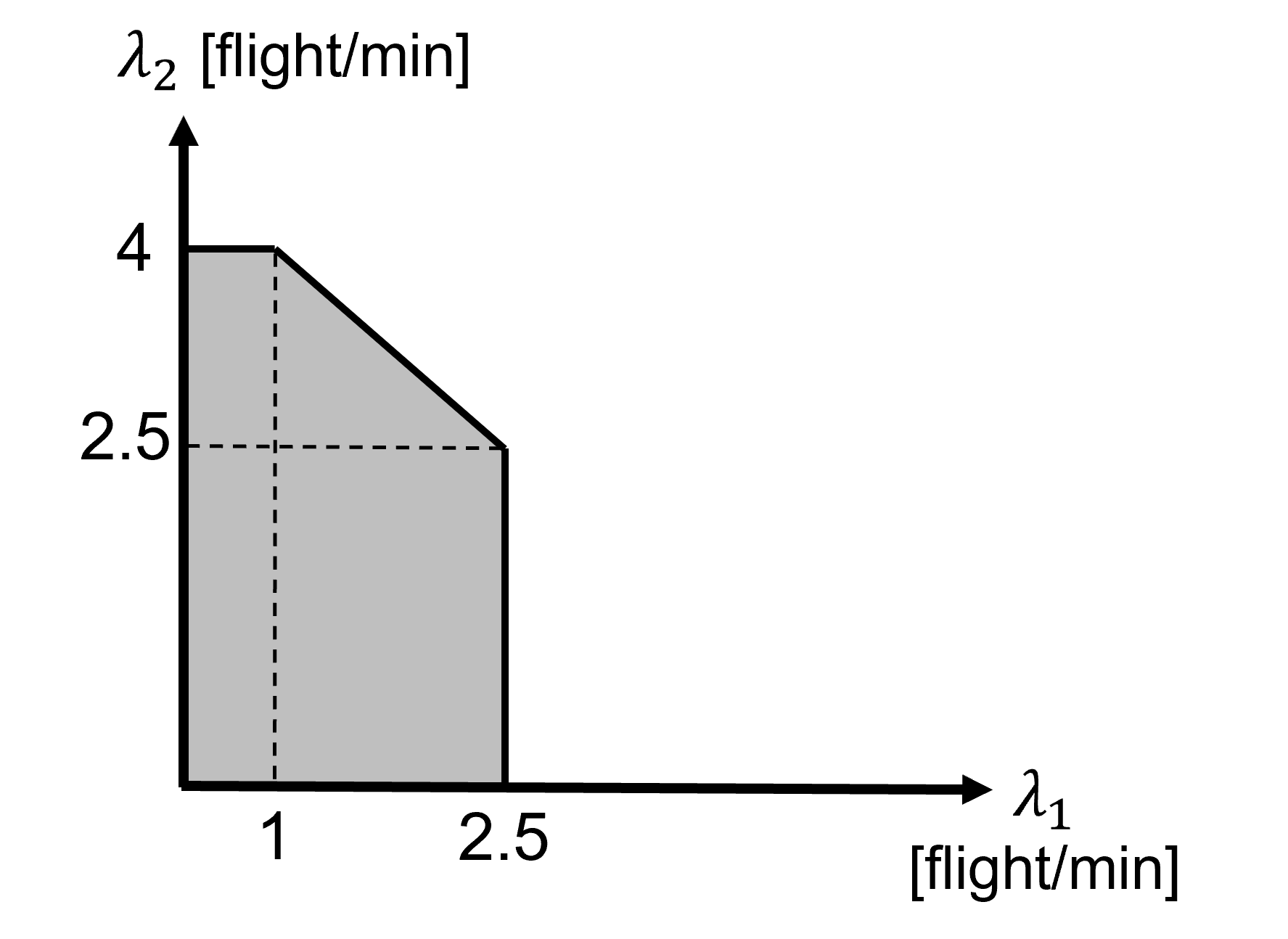}
    
    \vspace{0.1 cm}
    
    \caption{\sf The outer estimate of the under-saturation region of any conflict-free policy from Example~\ref{ex:outer-estimate}.}
    \label{fig:outer-estimate}
\end{figure}

\begin{example}\label{ex:outer-estimate}
Consider the setup in Example~\ref{ex:scheduling-vectors}. Suppose that the only demand in the network is for the O-D pairs $(1,3)$, $(1,4)$, $(2,3)$, and $(2,4)$. By Theorem~\ref{thm:necessary}, the under-saturation region $U_{\pi}$ of any conflict-free policy $\pi$ is a subset of
\begin{equation*}
\begin{aligned}
            D = \{\Arrivalrate{}:&~ \Arrivalrate{1} \leq x_1 + x_5,~ \Arrivalrate{2} \leq x_2 + x_6,~ \Arrivalrate{3} \leq x_3 + x_6, \\
            &~\Arrivalrate{4} \leq x_4 + x_5,~\text{for}~x_i\geq 0,~ i=1,\ldots,4, \\
            &~ \text{with}~ x_1+\ldots+x_4 \leq 1\}.    
\end{aligned}
\end{equation*}
\par
For instance, if $\Arrivalrate{3} = 2.5$ [flight/min] and $\Arrivalrate{4} = 1$ [flight/min], then the projection of the set $D$ onto the $(\Arrivalrate{1},\Arrivalrate{2})$-plane is shown in Figure~\ref{fig:outer-estimate}.
\end{example}
}
\mpcommentout{
\subsection{Optimal Allocation of Scheduling Vectors}
In this section, we present the main building block of our policy. Consider a scenario where for each O-D pair $p \in \routeset$, $W_p$ vehicles need to use the O-D pair $p$ to service trip requests and/or be rebalanced. Our goal is to find a flight schedule that allocates the minimum possible time to each scheduling vector in $\safeschset$. To achieve this, we first solve the following linear program:
    \begin{equation}\label{eq:linear-program}
        \begin{aligned}
            \text{Minimize}&~\sum_{i=1}^{\numsafesch}T_i \\
            \text{Subject to}&~ \sum_{i=1}^{\numsafesch}\routingvec{}{i}T_i \geq W,
        \end{aligned}
    \end{equation}
where $W = [W_p]$ and the inequality $\sum_{i=1}^{\numsafesch}\routingvec{}{i}T_i \geq W$ is considered component-wise. Let $T_{i}^{*}$, $i \in [\numsafesch]$, be the solution to the linear program \eqref{eq:linear-program}. We then calculate the flight schedules by switching the system into $\routingvec{}{i}$, using $\routingvec{}{i}$ for a duration of $(T_{i}^{*}+1)\takeofftau$, and then switch into the next scheduling vector provided that the new takeoffs do not violate the safety margins and separation requirements after switching. 
}

\section{Simulation Results}\label{section:simulation}
\begin{figure}[t]
    \centering
    \includegraphics[width=0.35\textwidth]{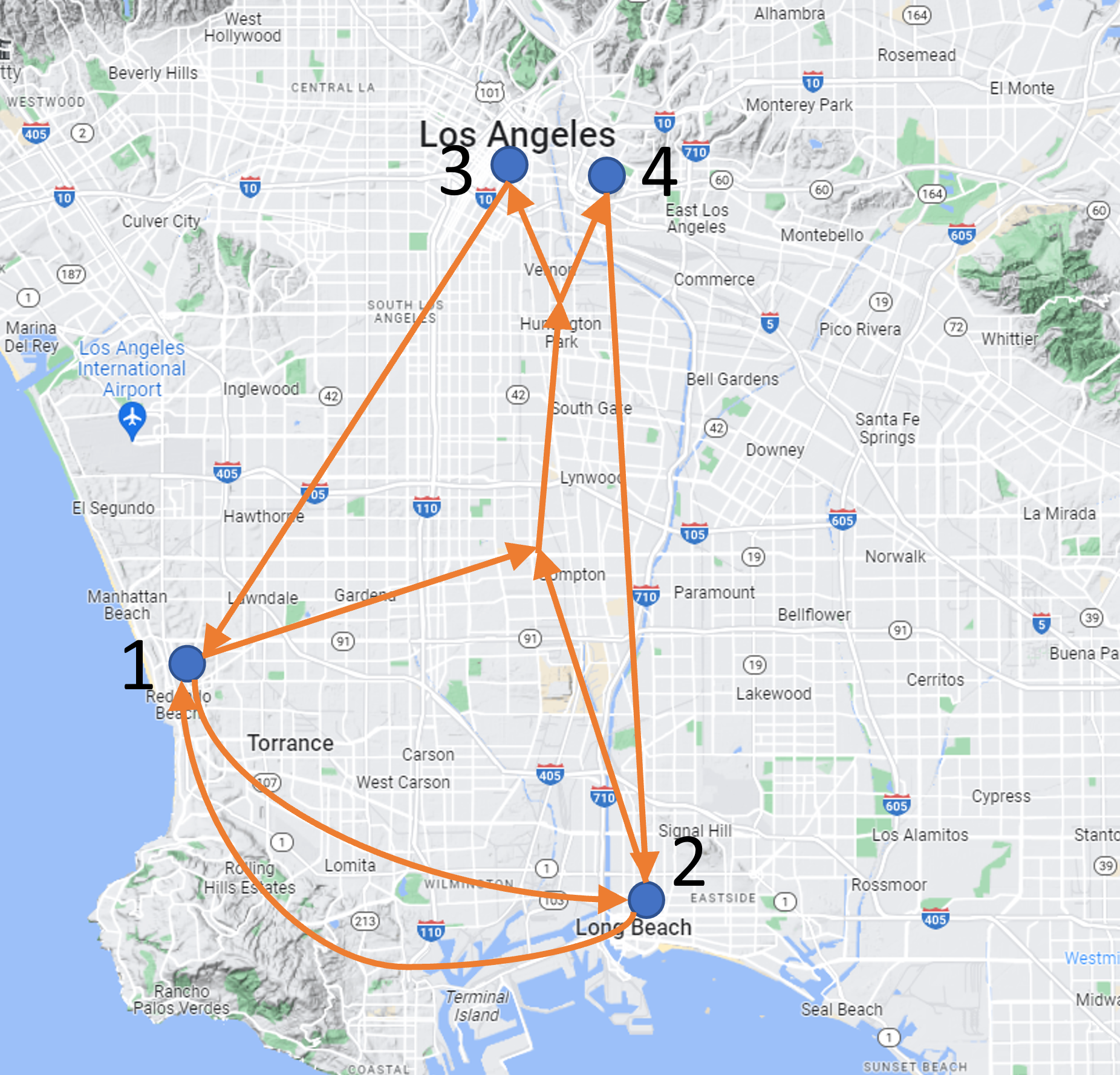}
    
    \vspace{0.1 cm}
    
    \caption{\sf A top-view sketch of a UAM network for Los Angeles. The blue circles show the vertiports and the orange arrows show the links.}
    \label{fig:LA-case-study}
\end{figure}
\begin{figure}[t]
    \centering
    \includegraphics[width=0.35\textwidth]{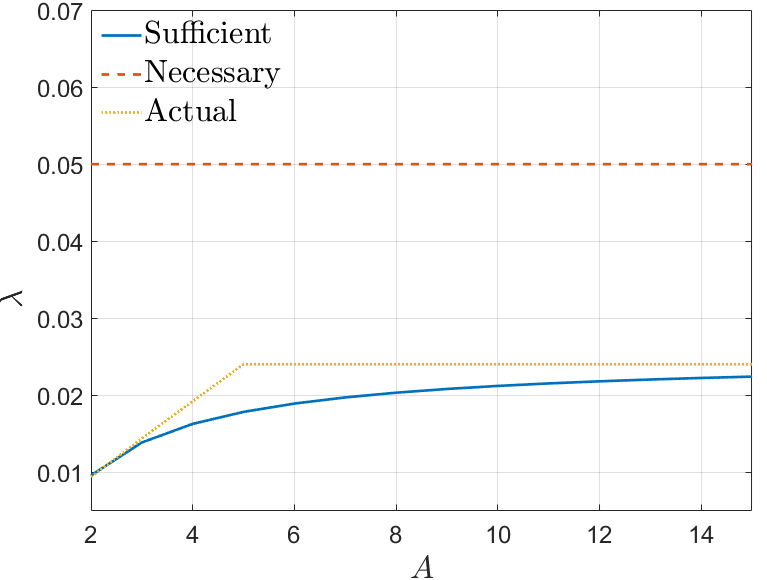}
    
    \vspace{0.1 cm}
    
    \caption{\sf The sufficient (from Theorem~\ref{thm:renewal-sufficient-reversable}), necessary (from Theorem~\ref{thm:necessary}), and actual (from simulations) bounds on $\Arrivalrate{}$.}
    \label{fig:bound-comparison}
\end{figure}
\begin{figure}[t]
    \centering
    \includegraphics[width=0.35\textwidth]{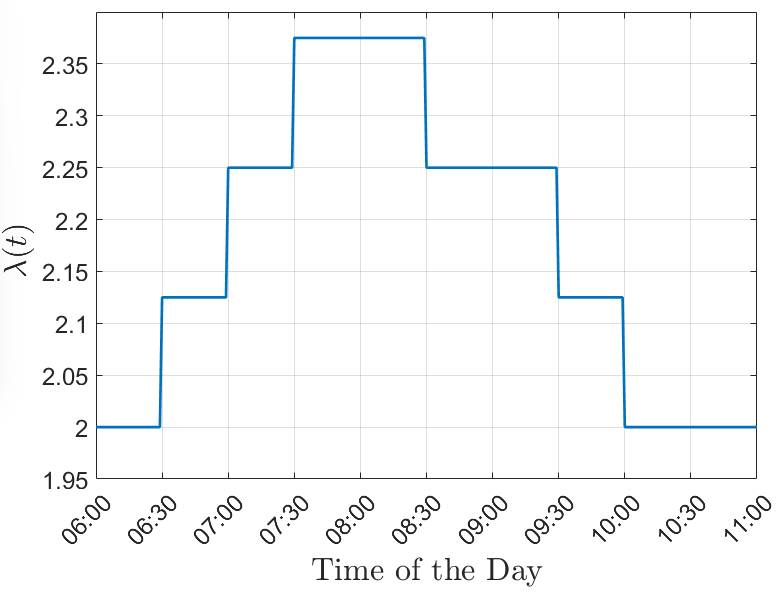}
    
    \vspace{0.1 cm}
    
    \caption{\sf The rate of trip requests per $\takeofftau$ minutes ($\Arrivalrate{}(t)$).}
    \label{fig:arrival-rate}
\end{figure}
In this section, we demonstrate the performance of the VertiSync policy and compare it with a heuristic scheduling policy from the literature. As a case study, we select the city of Los Angeles, which is anticipated to be an early adopter market due to severe road congestion, existing infrastructure, and mild weather \cite{vascik2017constraint}. All the simulations were performed using Python as the programming language, with optimization performed by Gurobi optimizer on a PC with Intel(R) Core(TM) i7-8700 processors, 3.2 GHz, 12 GB RAM with 64-bit Windows OS.
\par
\begin{figure*}[t]
    \centering
    \includegraphics[width=0.9\textwidth]{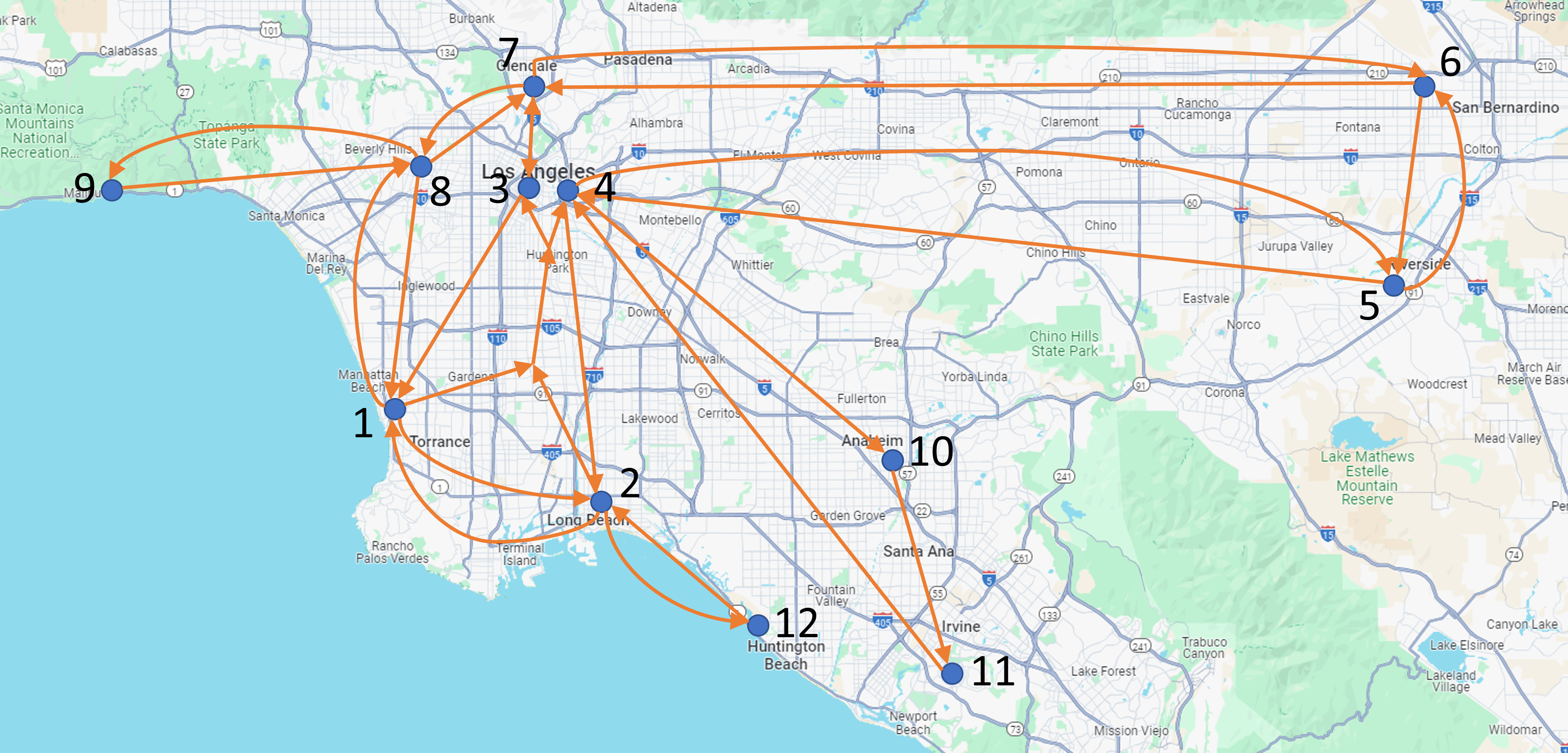}
    
    \vspace{0.1 cm}
    
    \caption{\sf A top-view sketch of an expanded UAM network for Los Angeles with $12$ vertiports and $27$ O-D pairs.}
    \label{fig:LA-case-study-large}
\end{figure*}
\begin{figure}[t]
    \centering
    \includegraphics[width=0.35\textwidth]{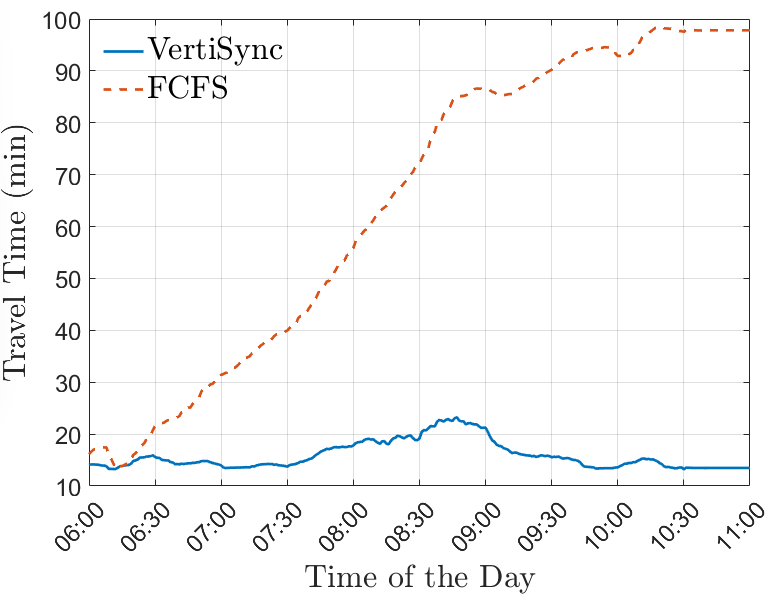}
    
    \vspace{0.1 cm}
    
    \caption{\sf The travel time under the VertiSync and FCFS policies for the demand $\Arrivalrate{}(t)$.}
    \label{fig:travel-time-comparison}
\end{figure}
\begin{figure}[t]
    \centering
    \includegraphics[width=0.35\textwidth]{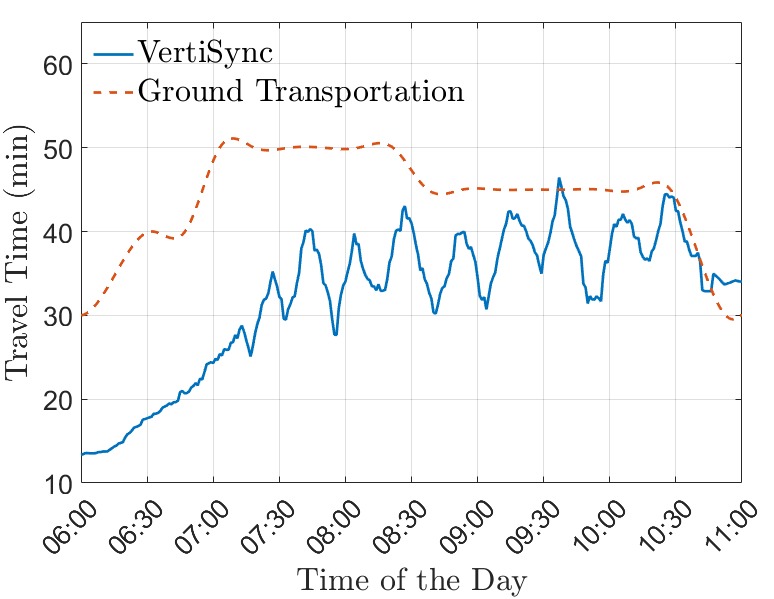}
    
    \vspace{0.1 cm}
    
    \caption{\sf The travel time under the VertiSync policy when the demand is increased to $1.2\Arrivalrate{}(t)$ (over-saturated regime), and the ground transportation travel time.}
    \label{fig:travel-time-ground}
\end{figure}
\subsection{Comparison of Theoretical Bounds}
In this section, we evaluate the ``sufficient" and ``necessary" under-saturation bounds given by Theorems~\ref{thm:renewal-sufficient-reversable} and \ref{thm:necessary}, respectively. We compare these bounds against the one obtained from simulations.

\par
We consider the Los Angeles network shown in Figure~\ref{fig:LA-case-study}, which consists of four vertiports located in Redondo Beach (vertiport $1$), Long Beach (vertiport $2$), and the Downtown Los Angeles area (vertiports $3$ and $4$). The choice of vertiport locations is adopted from \cite{vascik2017constraint}. Each vertiport is assumed to have $1$ vertipad. This network has eight O-D pairs $(1,3)$, $(1,4)$, $(2,3)$, $(2,4)$, $(3,1)$, $(4,2)$, $(1,2)$, and $(2,1)$, which we number from $1$ to $8$, respectively. We let the takeoff and landing separations $\takeofftau$ be $5$~[min], and the sector separation $\cruisetau$ be $0.5$~[min]. To simplify the simulations and without loss of generality, we assume that a UAM vehicle gets fully re-charged during the $\takeofftau$ period allocated for takeoff and landing operations. Removing this assumption would only result in a shift in the plots. The flight times for O-D pairs $(1,2)$ and $(2,1)$ are assumed to be $5$~[min], and for the rest of the O-D pairs are $8$~[min]. We let the trip requests for O-D pairs $1 - 4$ follow a Poisson process with the same rate $\Arrivalrate{}$. The demand for other O-D pairs is set to zero. 
\par
Given the number of vertiports and vertipads, this network has $40$ service vectors. However, the only service vectors that play a role in computing the sufficient and necessary bounds are $\routingvec{}{1} = (0.1, 0, 0, 0.1, 0, 0, 0, 0)$ and $\routingvec{}{2} = (0, 0.1, 0.1, 0, 0, 0, 0, 0)$. Using these service vectors, the sufficient and necessary bounds are calculated and shown in Figure~\ref{fig:bound-comparison} for different number of UAM vehicles. Note that the necessary bound found in Theoreom~\ref{thm:necessary} may be conservative. Therefore, we have also plotted the actual under-saturation bound in Figure~\ref{fig:bound-comparison}. The actual under-saturation bound is the fastest rate at which passengers can be serviced for O-D pair 1. Note that unlike the necessary bound from Theorem~\ref{thm:necessary}, the actual bound depends on the number of UAM vehicles. As can be seen from Figure~\ref{fig:bound-comparison}, the sufficient and actual bounds are less conservative than the necessary bound, as the necessary condition in Theorem~\ref{thm:necessary} does not take into account the number of UAM vehicles.

\subsection{Comparison with First-Come First-Serve Policy}\label{section:simulation-comparison}
We next evaluate travel time under our policy and the \emph{First-Come First-Serve} (FCFS) policy \cite{pradeep2018heuristic}. The FCFS policy is a heuristic policy which schedules the trip requests in the order of their arrival at the earliest time that does not violate the safety margins and separation requirements.
\par 
We again consider the Los Angeles network from the previous section, with each vertiport having $10$ vertipads. We let the number of UAM vehicles be $\aircraftcard = 32$, and assume that all of them are initially located at vertiport $1$. We let the takeoff, landing, and sector separations be the same as the previous section. Similar to the previous section and without loss of generality, we assume that a UAM vehicle gets re-charged during the $\takeofftau$ period allocated for takeoff and landing operations. We simulate this network during the morning period from 6:00-AM to 11:00-AM, during which the majority of demand originates from vertiports $1$ and $2$ and ends in vertiports $3$ and $4$. We let the trip requests for O-D pairs $1-4$ follow a Poisson process with a piece-wise constant rate $\Arrivalrate{}(t)$. The demand for other O-D pairs is set to zero during the morning period. With a slight abuse of notation, we scale $\Arrivalrate{}(t)$ to represent the number of trip requests per $\takeofftau$ minutes. From Theorem~\ref{thm:necessary}, given $\Arrivalrate{}(t) = \Arrivalrate{}$, the necessary condition for the network to remain under-saturated is that $\Arrivalrate{} \leq \takeofftau/4\cruisetau = 2.5$ trip requests per $\takeofftau$ minutes, i.e., $\rho := 4\Arrivalrate{}\cruisetau/\takeofftau \leq 1$. Figure~\ref{fig:arrival-rate} shows $\Arrivalrate{}(t)$, where we have considered a heavy demand between 7:00-AM to 9:30-AM to model the morning rush hour, i.e., $\rho(t) = 4\Arrivalrate{}(t)\cruisetau/\takeofftau \in [0.9,1)$ between 7:00-AM to 9:30-AM. 
\par
For the above demand and a random simulation seed, $518$ trips are requested during the morning period from which the FCFS policy services $411$ before 11:00~AM while the VertiSync policy is able to service all of them. Figure~\ref{fig:travel-time-comparison} shows the passenger travel time, which is computed by averaging the travel time of all trips requested within each $10$-minute time interval. The travel time of a trip is computed from the moment that trip is requested until it is completed, i.e., reached its destination. As can be seen, VertiSync is able to keep the network under-saturated, while the FCFS policy fails to do so, due to its greedy use of the vertipads and UAM airspace which is inefficient. 
\par
We next evaluate the demand threshold at which travel time under VertiSync becomes comparable to ground transportation. Figure~\ref{fig:travel-time-ground} shows the travel time under VertiSync when the demand is increased to $1.2\Arrivalrate{}(t)$. By Theorem~\ref{thm:necessary}, the network is in the over-saturated regime from 6:30-AM to 10:00-AM since $1.2\Arrivalrate{}(t) > 2.5$. However, as shown in Figure~\ref{fig:travel-time-ground}, the travel time is still less than the ground travel time during the morning period. The ground travel times are collected using the Google Maps service from 6:00-AM to 11:00-AM on Thursday, May 19, 2023 from Long Beach to Downtown Los Angeles (The travel times from Redondo Beach to Downtown Los Angeles were similar). 
\par

\mpcommentout{
\begin{figure}[t]
    \centering
    \includegraphics[width=0.35\textwidth]{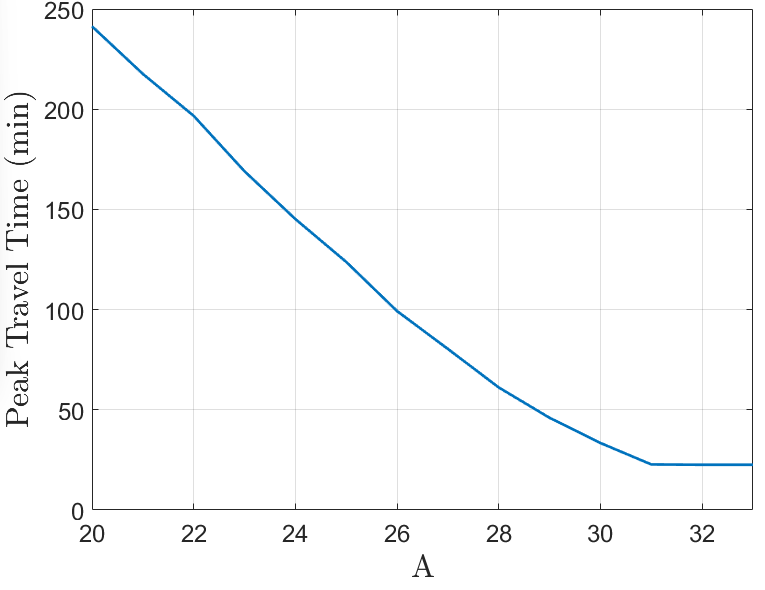}
    
    \vspace{0.1 cm}
    
    \caption{\sf The effect of the number of UAM vehicles $\aircraftcard$ on peak travel time under the VertiSync policy and demand $\Arrivalrate{}(t)$.}
    \label{fig:travel-time-fleetsize}
\end{figure}
Finally, we evaluate the effect of the number of UAM vehicles $\aircraftcard$ on the peak travel time under the VertiSync policy. Figure~\ref{fig:travel-time-fleetsize} shows the peak travel time versus the number of UAM vehicles, where the peak travel time is averaged over $1000$ simulation rounds with different seeds. For the demand $\Arrivalrate{}(t)$, the necessary number of UAM vehicles to keep the network under-saturated is $\aircraftcard = 31$. Moreover, the travel time does not improve as $\aircraftcard$ exceeds $32$. 
}
\subsection{Computation Results}\label{section:simulation-computation}
In this section, we present the computational experience
with the optimization problem \eqref{eq:TFMP-objective}-\eqref{eq:TFMP-energy-constraints}. We consider an expansion of the Los Angeles network from previous section to the network shown in Figure~\ref{fig:LA-case-study-large}. The network consists of $12$ vertiports and $27$ O-D pairs, with vertiport locations adopted from \cite{vascik2017constraint}. In this network, O-D pairs $(3, 7)$ and $(7, 3)$ are assumed to share a single route similar to Example~\ref{ex:reversibe-non-symmetric-network}. Moreover, note that even though $(11, 4)$ is an O-D pair, its opposite direction is not. Therefore, for rebalancing purposes, a UAM vehicle needs to make a stop at the intermediate vertiport $10$. 
\par
We let the number of vertipads, takeoff, landing, and airborne separations be the same as the previous section. We also let the number of UAM vehicles be $64$, and assume that they are evenly spread across the vertiports at the start of the cycle. We only consider a single cycle with a duration upper-bounded by $75$ minutes, i.e., $M_k \cruisetau = 75$ minutes in \eqref{eq:TFMP-objective}. Similar to the previous section, we assume that a UAM vehicle gets re-charged during the $\takeofftau$ period assigned to takeoff/landing operations.
\par
With the input data described above, the model has on the order of $2.6$ million constraints and $1.4$ million decision variables after applying the pre-processing technique in Section~\ref{sec:problem-size}. Theses numbers are about 5 times larger than the number of constraints and decision variables used in TFMP for national size instances in the United States \cite{bertsimas2011integer}. Given the size of the problem, we have taken advantage of the capability of Gurobi to stop after finding a good solution with optimality gap of $1$\%. However, as shall be seen, Gurobi is able to find optimal solutions within a reasonable time in all cases.
\par 
\begin{table}[th]
 \caption{Computational results for symmetric demand.}
 \vspace{0.2cm}
 \label{table:computational-results-sym}
\centering 
 \begin{tabular}{c c c c c} 
 \# Trip Requests & \begin{tabular}{@{}c@{}}CPU Time \\ (sec)\end{tabular}    &  \begin{tabular}{@{}c@{}}Cycle Len. \\ (min)\end{tabular} & O.F. & Gap (\%) \\ [1ex] 
 \hline  
 27 & 185.0 & 9 & 238 & 0.00 \\ 
 81 & 185.8 & 16 & 678 & 0.00 \\
 135 & 203.6 & 61 & 1,130 & 0.00 \\
 270 & 426.6 & 75 & 2,316 & 0.00 \\
 324 & 688.6 & 75 & 2,798 & 0.00 \\
 378 & Infeasible & - & - & - \\ [1ex] 
 \hline

 \end{tabular}
\end{table}

\begin{table}[th]
 \caption{Computational results for asymmetric demand.}
  \vspace{0.2cm}
 \label{table:computational-results-asym}
\centering 
 \begin{tabular}{c c c c c} 
 \# Trip Requests & \begin{tabular}{@{}c@{}}CPU Time \\ (sec)\end{tabular}    &  \begin{tabular}{@{}c@{}}Cycle Len. \\ (min)\end{tabular} & O.F. & Gap (\%) \\ [1ex] 
 \hline  
 27 & 179.8 & 75 & 270 & 0.00 \\ 
 83 & 261.0 & 43 & 1,146 & 0.00 \\
 125 & 347.7 & 75 & 2,078 & 0.34 \\
 173 & 1,415.1 & 75 & 2,794 & 0.00 \\ 
 215 & Infeasible & - & - & - \\ [1ex] 
 \hline

 \end{tabular}
\end{table}
We consider two cases for how the demand is spread across the network; symmetric and asymmetric. In the symmetric case, the number of trip requests are spread evenly across all O-D pairs. In the asymmetric case, $80 \%$ of the demand originates from vertiports $1, 2, 5, 6,$ and $11$ and ends in vertiports $3, 4,$ and $7$. The computational results for both cases are reported in Tables~\ref{table:computational-results-sym} and \ref{table:computational-results-asym}. The third column in each table shows the actual cycle length, and the fourth column shows the Objective Function (O.F.) value. It is clear from the results that Gurobi can compute the optimal solution within a reasonable time in all but two cases, with an average CPU time of 337.9 seconds for symmetric demand, and 300 seconds for asymmetric demand. The two infeasible cases arise because $M_k \cruisetau$ is too tight an upper bound for all trip requests to be processed within that time.
\par
We observe that the computational time for the asymmetric demand is generally longer than the symmetric case. This can be explained by noting that if several UAM vehicles need to fly the same O-D pair during the same time period, then there are several orderings in which they can do so without changing the value of the objective function. A second observation is that the feasible region for the asymmetric demand is much smaller than the symmetric demand. This is due to the increasing level of congestion in vertiports with high demand, which prevents flights to occur simultaneously. 
\par
A third observation is that the computational time tends to degrade when we are closer to the infeasiblity border. Indeed, by accepting a larger optimality gap, say $3\%$, the algorithm is able to compute a good quality solution much faster. For example, in the symmetric demand case with $324$ trip request, the computational time is reduced to $430.1$ seconds.

\section{Conclusion}\label{section:conclusion}
In this paper, we provided a conflict-free takeoff scheduling policy for on-demand UAM networks and analyzed its throughput. We conducted a case study for the city of Los Angeles and showed that our policy significantly improves travel time compared to a first-come first-serve policy. We also showed that our policy is computationally viable even for large instances of the problem. The next step in our research is to reduce computation
times further by using methods such as column generation \cite{balakrishnan2014optimal}, or by approximating the optimal solution using feed-forward neural networks \cite{bertsimas2022online}. We also plan to implement our policy in a high-fidelity air traffic simulator to study the effects of UAM vehicle dynamics and weather conditions on the performance of our policy.

\bibliographystyle{ieeetr}
\bibliography{References_main}

\begin{thebibliography}{10}

\bibitem{holden2016uber}
J.~Holden and N.~Goel, ``Uber elevate: Fast-forwarding to a future of on-demand urban air transportation. uber technologies,'' {\em Inc., San Francisco, CA}, 2016.

\bibitem{safadi2023macroscopic}
Y.~Safadi, R.~Fu, Q.~Quan, and J.~Haddad, ``Macroscopic fundamental diagrams for low-altitude air city transport,'' {\em Transportation Research Part C: Emerging Technologies}, vol.~152, p.~104141, 2023.

\bibitem{karp2022gambling}
A.~KARP, ``Gambling on advanced air mobility,'' {\em Aerospace America}, vol.~60, no.~4, pp.~40--47, 2022.

\bibitem{mueller2017enabling}
E.~R. Mueller, P.~H. Kopardekar, and K.~H. Goodrich, ``Enabling airspace integration for high-density on-demand mobility operations,'' in {\em 17th AIAA Aviation Technology, Integration, and Operations Conference}, p.~3086, 2017.

\bibitem{chin2023traffic}
C.~Chin, V.~Qin, K.~Gopalakrishnan, and H.~Balakrishnan, ``Traffic management protocols for advanced air mobility,'' {\em Frontiers in Aerospace Engineering}, vol.~2, p.~1176969, 2023.

\bibitem{richetta1993solving}
O.~Richetta and A.~R. Odoni, ``Solving optimally the static ground-holding policy problem in air traffic control,'' {\em Transportation science}, vol.~27, no.~3, pp.~228--238, 1993.

\bibitem{bertsimas1998air}
D.~Bertsimas and S.~S. Patterson, ``The air traffic flow management problem with enroute capacities,'' {\em Operations research}, vol.~46, no.~3, pp.~406--422, 1998.

\bibitem{bertsimas2011integer}
D.~Bertsimas, G.~Lulli, and A.~Odoni, ``An integer optimization approach to large-scale air traffic flow management,'' {\em Operations research}, vol.~59, no.~1, pp.~211--227, 2011.

\bibitem{bauranov2021designing}
A.~Bauranov and J.~Rakas, ``Designing airspace for urban air mobility: A review of concepts and approaches,'' {\em Progress in Aerospace Sciences}, vol.~125, p.~100726, 2021.

\bibitem{wollkind2004automated}
S.~Wollkind, J.~Valasek, and T.~Ioerger, ``Automated conflict resolution for air traffic management using cooperative multiagent negotiation,'' in {\em AIAA Guidance, Navigation, and Control Conference and Exhibit}, p.~4992, 2004.

\bibitem{schouwenaars2004decentralized}
T.~Schouwenaars, J.~How, and E.~Feron, ``Decentralized cooperative trajectory planning of multiple aircraft with hard safety guarantees,'' in {\em AIAA Guidance, Navigation, and Control Conference and Exhibit}, p.~5141, 2004.

\bibitem{richards2004decentralized}
A.~Richards and J.~How, ``Decentralized model predictive control of cooperating uavs,'' in {\em 2004 43rd IEEE Conference on Decision and Control (CDC)(IEEE Cat. No. 04CH37601)}, vol.~4, pp.~4286--4291, IEEE, 2004.

\bibitem{yang2021autonomous}
X.~Yang and P.~Wei, ``Autonomous free flight operations in urban air mobility with computational guidance and collision avoidance,'' {\em IEEE Transactions on Intelligent transportation systems}, vol.~22, no.~9, pp.~5962--5975, 2021.

\bibitem{FAAConOps}
FAA, ``Urban air mobility (uam) concept of operations.'' \url{http://www.faa.gov/air-taxis/uam_blueprint}.
\newblock Accessed: 2024-08-13.

\bibitem{thipphavong2018urban}
D.~P. Thipphavong, R.~Apaza, B.~Barmore, V.~Battiste, B.~Burian, Q.~Dao, M.~Feary, S.~Go, K.~H. Goodrich, J.~Homola, {\em et~al.}, ``Urban air mobility airspace integration concepts and considerations,'' in {\em 2018 Aviation Technology, Integration, and Operations Conference}, p.~3676, 2018.

\bibitem{zhu2019pre}
G.~Zhu and P.~Wei, ``Pre-departure planning for urban air mobility flights with dynamic airspace reservation,'' in {\em AIAA Aviation 2019 Forum}, p.~3519, 2019.

\bibitem{tang2021automated}
H.~Tang, Y.~Zhang, V.~Mohmoodian, and H.~Charkhgard, ``Automated flight planning of high-density urban air mobility,'' {\em Transportation Research Part C: Emerging Technologies}, vol.~131, p.~103324, 2021.

\bibitem{chin2023protocol}
C.~Chin, K.~Gopalakrishnan, H.~Balakrishnan, M.~Egorov, and A.~Evans, ``Protocol-based congestion management for advanced air mobility,'' {\em Journal of Air Transportation}, vol.~31, no.~1, pp.~35--44, 2023.

\bibitem{pradeep2018heuristic}
P.~Pradeep and P.~Wei, ``Heuristic approach for arrival sequencing and scheduling for evtol aircraft in on-demand urban air mobility,'' in {\em 2018 IEEE/AIAA 37th Digital Avionics Systems Conference (DASC)}, pp.~1--7, IEEE, 2018.

\bibitem{bosson2018simulation}
C.~Bosson and T.~A. Lauderdale, ``Simulation evaluations of an autonomous urban air mobility network management and separation service,'' in {\em 2018 Aviation Technology, Integration, and Operations Conference}, p.~3365, 2018.

\bibitem{pavone2012robotic}
M.~Pavone, S.~L. Smith, E.~Frazzoli, and D.~Rus, ``Robotic load balancing for mobility-on-demand systems,'' {\em The International Journal of Robotics Research}, vol.~31, no.~7, pp.~839--854, 2012.

\bibitem{POOLADSANJ2023TRC}
M.~Pooladsanj, K.~Savla, and P.~A. Ioannou, ``Ramp metering to maximize freeway throughput under vehicle safety constraints,'' {\em Transportation Research Part C: Emerging Technologies}, vol.~154, p.~104267, 2023.

\bibitem{vascik2017constraint}
P.~D. Vascik and R.~J. Hansman, ``Constraint identification in on-demand mobility for aviation through an exploratory case study of los angeles,'' in {\em 17th AIAA Aviation Technology, Integration, and Operations Conference}, p.~3083, 2017.

\bibitem{armony2003queueing}
M.~Armony and N.~Bambos, ``Queueing dynamics and maximal throughput scheduling in switched processing systems,'' {\em Queueing systems}, vol.~44, no.~3, pp.~209--252, 2003.

\bibitem{balakrishnan2014optimal}
H.~Balakrishnan and B.~G. Chandran, ``Optimal large-scale air traffic flow management,'' {\em Massachusetts Institute of Technology, Tech. Rep}, 2014.

\bibitem{bertsimas2022online}
D.~Bertsimas and B.~Stellato, ``Online mixed-integer optimization in milliseconds,'' {\em INFORMS Journal on Computing}, vol.~34, no.~4, pp.~2229--2248, 2022.

\bibitem{meyn2012markov}
S.~P. Meyn and R.~L. Tweedie, {\em Markov chains and stochastic stability}.
\newblock Springer Science \& Business Media, 2012.

\end{thebibliography}

\appendices

\section{Proof of Theorem \ref{thm:renewal-sufficient-reversable}}\label{section:proof-renewal-sufficient}
Consider the $(k+1)$-th cycle. We first construct a feasible solution to the optimization problem \eqref{eq:TFMP-objective}-\eqref{eq:TFMP-energy-constraints} by using the service vectors in $\safeschset$. Consider the Linear Program (LP) 
\begin{equation}\label{eq:linear-program}
    \begin{aligned}
        \text{Minimize}&~\sum_{i=1}^{\numsafesch}K_i \\
        \text{Subject to}&~ \sum_{i=1}^{\numsafesch}\routingvec{}{i}K_i \geq \Queuelength{}(t_{k+1}), \\
        &~ K_i \geq 0,~ i \in [\numsafesch],
    \end{aligned}
\end{equation}
where the inequality $\sum_{i=1}^{\numsafesch}\routingvec{}{i}K_i \geq \Queuelength{}(t_{k+1})$ is considered component-wise. Let $K_{i}^{*}$, $i \in [\numsafesch]$, be a feasible solution to \eqref{eq:linear-program}. A feasible solution to the optimization problem \eqref{eq:TFMP-objective}-\eqref{eq:TFMP-energy-constraints} can be constructed as follows: 
\begin{enumerate}
    \item Choose a service vector $\routingvec{}{i} \in \safeschset$ with $K_{i}^{*} > 0$, and, before activating $\routingvec{}{i}$, distribute the UAM vehicles in the system so that for any $p \in \routeset$ with $\routingvec{p}{i} > 0$, there are
    \begin{equation*}
       \aircraftcard_p := \frac{\routingvec{p}{i}}{\sum_{p' \in \routeset}\routingvec{p'}{i}}\aircraftcard 
    \end{equation*}
    vehicles at vertiport $o_p$. Note that $\aircraftcard_p \geq 1$ by the assumption on the minimum number of UAM vehicles. The initial distribution of UAM vehicles takes at most $\aircraftcard \overline{T} / \cruisetau$ time steps, where $\overline{T} = \max_{p \in \routeset}T_p$. 
    
    \item Once the initial distribution is completed and the airspace is empty, we would like to activate $\routingvec{}{i}$ for $K_{i}^{*}+k_{\tau}$ time steps to service outstanding trip requests for route $p$. However, to ensure that a UAM vehicle is available at vertiport $o_p$ when $\routingvec{}{i}$ is active, additional time for rebalancing may be required. Let $C_i$ denote an upper bound on this additional rebalancing time. If $\routingvec{}{i}$ is symmetric, then 
    \begin{equation}\label{eq:rebalancing-bound-sym}
        \begin{aligned}
            &C_i =\max_{\substack{p \in \routeset:\\ \routingvec{p}{i} > 0}} \max \{\frac{T_p}{\cruisetau} + k_c - \frac{\aircraftcard_p}{\routingvec{p}{i}}, 0\}\frac{\routingvec{p}{i}(K_{i}^{*}+k_{\tau})}{\aircraftcard_p} \\
            &\hspace{0.3cm}= C_i(K_{i}^{*}) + C_i^{'},
        \end{aligned}
    \end{equation}
    where $C_i(K_{i}^{*})$ is the part that depends on $K_{i}^{*}$ and $C_i^{'}$ is the part that does not depend on $K_{i}^{*}$. The right hand side of \eqref{eq:rebalancing-bound-sym} is the time it takes for a UAM vehicle from the opposite direction $q = (d_p, o_p)$ to reach vertiport $o_p$ ($T_p/\cruisetau$) and gets recharged ($k_c$), subtracted by the amount of time it takes for all available vehicles to depart from $o_p$ ($\aircraftcard_p/\routingvec{p}{i}$), multiplied by $\routingvec{p}{i}(K_{i}^{*}+k_{\tau})/\aircraftcard_p$, which is the total number of iterations needed for $\routingvec{}{i}$ to be used for $K_{i}^{*}+k_{\tau}$ time steps. On the other hand, if $\routingvec{}{i}$ is non-symmetric, then 
    \begin{equation}\label{eq:rebalancing-bound-non-sym}
        \begin{aligned}
            &C_i = K_{i}^{*} +\\
            & 2\max_{\substack{p \in \routeset:\\ \routingvec{p}{i} > 0}} \max \{\frac{T_p}{\cruisetau} + k_c - \frac{\aircraftcard_p}{\routingvec{p}{i}}, \frac{T_p}{\cruisetau}\}\frac{\routingvec{p}{i}(K_{i}^{*}+k_{\tau})}{\aircraftcard_p} \\
            &\hspace{0.3cm}= C_i(K_{i}^{*}) + C_i^{'}.
        \end{aligned}
    \end{equation}
    \par
    The right hand side of \eqref{eq:rebalancing-bound-non-sym} is calculated similar to \eqref{eq:rebalancing-bound-sym}. Note that if $\routingvec{}{i}$ is symmetric, UAM vehicles can be rebalanced while $\routingvec{}{i}$ is active. However, if $\routingvec{}{i}$ is not symmetric, it must be deactivated first before all UAM vehicles can be rebalanced to vertiport $o_p$.  

    \item Once step 2 is completed and the airspace is empty, repeat steps 1 and 2 for another vector in $\safeschset$. The amount of time it takes for the airspace to become empty at the end of step 2 is at most $\overline{T} / \cruisetau$ time steps. Once each service vector $\routingvec{}{i} \in \safeschset$ with $K_{i}^{*} > 0$ has been activated, $\sum_{i=1}^{\numsafesch}\routingvec{p}{i}K_{i}^{*}$ requests will be serviced for each O-D pair $p \in \routeset$. From the constraint of the LP \eqref{eq:linear-program}, $\sum_{i=1}^{\numsafesch}\routingvec{}{i}K_{i}^{*} \geq \Queuelength{}(t_{k+1})$, i.e., all the requests for the $(k+1)$-th cycle will be serviced and the cycle ends. 
\end{enumerate}
By combining the time each of the above steps takes, it follows that
\begin{equation}\label{eq:cycle-upper-bound}
    \Cyclelength(k+1) \leq \sum_{i=1}^{\numsafesch}(1+c_i)K_{i}^{*} + \sum_{i=1}^{\numsafesch}C_{i}^{'} + \frac{\numsafesch}{\cruisetau}(\aircraftcard \overline{T} + \overline{T}),
\end{equation}
where $c_i = C_i(K_{i}^{*}) / K_{i}^{*}$, which does not depend on $K_{i}^{*}$, and $\Cyclelength(k+1) = (t_{k+2} - t_{k+1})/\cruisetau$.
\par
Without loss of generality, we assume that the ordering by which $\routingvec{}{i}$'s are chosen at each cycle are fixed. Moreover, when a cycle ends, we postpone the start of the next cycle to when the airspace becomes empty, and we assume that this new start time is a multiple of $\cruisetau$. 
With these assumptions, we can cast the network as a discrete-time Markov chain with the state $\{\Queuelength{}(t_k)\}_{k \geq 1}$. Since the state $\Queuelength{}(t_k) = 0$ is reachable from all other states, and $\mathbb{P}\left(\Queuelength{}(t_{k+1}) = 0~|~ \Queuelength{}(t_k) = 0\right) > 0$, the chain is irreducible and aperiodic. Consider the function $f: \Z_{+}^{\routecard} \ra [0,\infty)$
\begin{equation*}
    \Lyap{\Queuelength{}(t_k)} = \Cyclelength^{2}(k),
\end{equation*}
where $\Z_{+}^{\routecard}$ is the set of $\routecard$-tuples of non-negative integers. Note that $\Cyclelength(k)$ is a non-negative integer from our earlier assumption that the cycle start times are a multiple of $\cruisetau$. We let $\Lyap{\Queuelength{}(t_k)} \equiv \Lyap{t_k}$ for brevity.
\par

We next show that
\begin{equation}\label{eq:cycle-decreases-expectation}
    \limsup_{n \ra \infty}\E{\left(\frac{\Cyclelength(k+1)}{\Cyclelength(k)}\right)^2~\middle|~ \Cyclelength(k)=n} < 1.
\end{equation}
\par
To show \eqref{eq:cycle-decreases-expectation}, let $\Cyclelength(k)=n$, and let $\overline{A_p}(t_k,t_{k+1})$ be the cumulative number of trip requests for the O-D pair $p \in \routeset$ during the time interval $[t_k,t_{k+1})$. Note that $\Queuelength{p}(t_{k+1})=\overline{A_p}(t_k,t_{k+1})$, which implies from the strong law of large numbers that, with probability one, 
\begin{equation*}
    \lim_{n \ra \infty}\frac{\Queuelength{p}(t_{k+1})}{n} = \Arrivalrate{p}.
\end{equation*}
\par
By the assumption of the theorem, $\Arrivalrate{} \in D_{1}^{\circ}$. Hence, with probability one, there exists $N' > 0$ such that for all $n > N'$ we have $\Queuelength{}(t_{k+1})/n \in D_{1}^{\circ}$. Since $D_{1}^{\circ}$ is an open set, for a given $n > N'$, there exists non-negative $x_1, x_2, \ldots, x_{\numsafesch}$ with $\sum_{i=1}^{\numsafesch}x_i < 1$ such that $\Queuelength{}(t_{k+1})/n < \sum_{i=1}^{\numsafesch}\routingvec{}{i}x_i / (1 + c_i)$, or equivalently, $\Queuelength{}(t_{k+1}) < \sum_{i=1}^{\numsafesch}\routingvec{}{i} n x_i / (1 + c_i)$. Thus,  if we let $K_i := nx_i/(1 + c_i)$, $i \in [\numsafesch]$, then, $\Queuelength{}(t_{k+1}) < \sum_{i=1}^{\numsafesch}\routingvec{}{i} K_i$, which implies that $K_i$'s are a feasible solution to the LP \eqref{eq:linear-program}. Moreover, $\sum_{i=1}^{\numsafesch}(1 + c_i)K_i < n$. Therefore, from \eqref{eq:cycle-upper-bound} and with probability one, it follows for all $n > N'$ that
\begin{equation*}
    \begin{aligned}
        \Cyclelength(k+1) &\leq \sum_{i=1}^{\numsafesch}(1 + c_i)K_{i}^{*} + \sum_{i=1}^{\numsafesch}C_{i}^{'} + \frac{\numsafesch}{\cruisetau}(\aircraftcard \overline{T} + \overline{T}) \\
        &< n + \sum_{i=1}^{\numsafesch}C_{i}^{'} + \frac{\numsafesch}{\cruisetau}(\aircraftcard \overline{T} + \overline{T}),
    \end{aligned}
\end{equation*}
which in turn implies, with probability one, that
\begin{equation}\label{eq:cycle-decreases-w.p.1}
    \limsup_{n \ra \infty}\left(\frac{\Cyclelength(k+1)}{n}\right)^2 < 1.
\end{equation}
\par
Finally, since the number of trip requests for each O-D pair is at most $1$ per $\cruisetau$ minutes, the sequence $\{\Cyclelength(k+1)/n\}_{n=1}^{\infty}$ is upper bounded by an integrable function. Hence, from \eqref{eq:cycle-decreases-w.p.1} and the Fatou's Lemma \eqref{eq:cycle-decreases-expectation} follows.
\par
We will now use \eqref{eq:cycle-decreases-expectation} to show that the network is under-saturated. Note that \eqref{eq:cycle-decreases-expectation} implies that there exists $\delta \in (0,1)$ and $N$ such that for all $n > N$ we have
\begin{equation*}
    \E{\left(\frac{\Cyclelength(k+1)}{\Cyclelength(k)}\right)^2 \middle|~ \Cyclelength(k) = n} < 1 - \delta,
\end{equation*}
which in turn implies that
\begin{equation*}
    \E{\Cyclelength^2(k+1) - \Cyclelength^2(k) \middle|~ \Cyclelength(k) > N} < -\delta \Cyclelength^2(k).
\end{equation*}
\par 
Furthermore, $\Queuelength{p}(t_k) \leq \Cyclelength(k) \leq \Cyclelength^2(k)$ for all $p \in \routeset$, where the first inequality follows from the fact that $t_{k+1}-t_{k} \geq \Queuelength{p}(t_k)\cruisetau$ for any O-D pair $p \in \routeset$. Therefore, $\E{\Cyclelength^2(k+1) - \Cyclelength^2(k) \middle|~ \Cyclelength(k) > N} < -\delta\|\Queuelength{}(t_k)\|_{\infty}$, where $\|\Queuelength{}\|_{\infty} = \max_{p}\Queuelength{p}$. Finally, if $\Cyclelength(k) \leq N$, then $\Cyclelength(k+1) \leq 2 N \routecard (\overline{T}/\cruisetau + k_{\takeofftau}) =: b$. Therefore, 
\begin{equation*}
\begin{aligned}
    \E{\Cyclelength^2(k+1) \middle|~ \Cyclelength(k) \leq N} &\leq b^2 \\
    & +\Cyclelength^2(k)-\delta\|\Queuelength{}(t_k)\|_{\infty},
\end{aligned}
\end{equation*}
where we have used $\delta\|\Queuelength{}(t_k)\|_{\infty} \leq \|\Queuelength{}(t_k)\|_{\infty} \leq \Cyclelength^2(k)$. Combining all the previous steps gives
\begin{equation*}
    \E{f(t_{k+1}) - f(t_k) \middle|~ \Queuelength{}(t_k)} \leq -\delta \|\Queuelength{}(t_k)\|_{\infty} + b^2\mathds{1}_{B},
\end{equation*}
where $B = \{\Queuelength{}(t_k): f(t_k) \leq N^2\}$ (a finite set). From this and the well-known Foster-Lyapunov drift criterion \cite[Theorem 14.0.1]{meyn2012markov}, it follows that $\limsup_{t \to \infty} \E{\Queuelength{p}(t)} < \infty$ for all $p \in \routeset$, i.e., the network is under-saturated.

\section{Proof of Theorem \ref{thm:necessary}}\label{section:proof-necessary}
We prove by contradiction. Suppose that some conflict-free policy $\pi$ keeps the network under-saturated but $\Arrivalrate{} \notin D$. Then, for any non-negative $x_1, x_2, \ldots, x_{\numsafesch}$ with $\sum_{i =1}^{\numsafesch}x_i \leq 1$, there exists some O-D pair $p \in \routeset$ such that $\Arrivalrate{p} > \sum_{i = 1}^{\numsafesch}\routingvec{p}{i} x_i$.
\par
Without loss of generality, we may assume that whenever the service vector $\routingvec{}{i}$ becomes active, it remains active for a time interval that is a multiple of $\cruisetau$. Given $k \in \N_{0}$, let $s_k := k\cruisetau$, and let $x_i(s_k)$ be the proportion of time that the service vector $\routingvec{}{i}$ has been active under policy $\pi$ up to time $s_k$. Then, $x_i := \limsup_{k \ra \infty}x_i(s_k) \geq 0$ for all $i \in [\numsafesch]$ and $\sum_{i = 1}^{\numsafesch}x_i \leq 1$. Therefore, there exists $p \in \routeset$ such that $\Arrivalrate{p} > \sum_{i = 1}^{\numsafesch}\routingvec{p}{i} x_i$. Note that when the service vector $\routingvec{}{i}$ is active, the trip requests for O-D pair $p$ are serviced at the rate of at most $\routingvec{p}{i}$. Hence, the number of trip requests for O-D pair $p$ that have been serviced by $\routingvec{}{i}$ up to time $s_k$ is at most $\routingvec{p}{i}x_i(s_k)s_k$. Let $\overline{A_p}(s_k) := \overline{A_p}(0,s_k)$ be the cumulative number of flight requests for O-D pair $p$ up to time $s_k$. We have
\begin{equation*}
    \Queuelength{p}(s_k) \geq \Queuelength{p}(0) + \overline{A_p}(s_k) - \sum_{i = 1}^{\numsafesch}\routingvec{p}{i} x_i(s_k) s_k,
\end{equation*}
which implies
\begin{equation*}
    \frac{\Queuelength{p}(s_k)}{s_k} \geq \frac{\Queuelength{p}(0)}{s_k} + \frac{\overline{A_p}(s_k)}{s_k} - \sum_{i = 1}^{\numsafesch}\routingvec{p}{i} x_i(s_k).
\end{equation*}
\par
By letting $k \ra \infty$, it follows from the strong law of large numbers that, with probability one,
\begin{equation*}
    \liminf_{k \ra \infty} \frac{\Queuelength{p}(s_k)}{k} \geq \Arrivalrate{p} - \sum_{i = 1}^{\numsafesch}\routingvec{p}{i} x_i.
\end{equation*}
\par
Since $\Arrivalrate{p} > \sum_{i = 1}^{\numsafesch}\routingvec{p}{i}x_i$, then, with probability one, $\liminf_{k \ra \infty} \Queuelength{p}(s_k)/k$ is bounded away from zero. Hence, 
\begin{equation*}
  \liminf_{k \ra \infty} \Queuelength{p}(s_k) = \infty.  
\end{equation*}
\par
Therefore, the expected number of flight requests for the O-D pair $p$ grows unbounded. This contradicts the network being under-saturated.

\end{document}